\newtheorem{thm}{Theorem}
\newtheorem{lem}[thm]{Lemma}
\newtheorem{prop}[thm]{Proposition}
\theoremstyle{definition}
\newtheorem{defn}[thm]{Definition}
\newtheorem{exmp}[thm]{Example}
\newdimen\@bls                              
\newenvironment{pf}%
  {\par\addvspace{\@bls \@plus 0.5\@bls \@minus 0.1\@bls}\noindent
   {\bfseries\Elproofname}\enspace\ignorespaces}%
  {\par\addvspace{\@bls \@plus 0.5\@bls \@minus 0.1\@bls}}
\def\Elproofname{PROOF.}
\begin{document}
\title{Simultaneous Modular Reduction and Kronecker Substitution for
  Small Finite Fields\thanks{This research was partly supported by the
    French National Research Agency (ANR Safescale, ANR Gecko)}
}

\author{Jean-Guillaume Dumas\footnote{Laboratoire J. Kuntzmann,
    Universit\'e de Grenoble, umr CNRS 5224. BP 53X, 51, rue des
    Math\'ematiques, F38041 Grenoble, France. \{Jean-Guillaume.Dumas,Laurent.Fousse\}@imag.fr.}
\and Laurent Fousse\footnotemark[2]
\and Bruno Salvy\footnote{Algorithms Project,
INRIA Rocquencourt,
78153 Le Chesnay.
France. Bruno.Salvy@inria.fr..}}
\maketitle
\begin{abstract} We present algorithms to perform modular polynomial
  multiplication or modular dot product efficiently in a single
  machine word. We pack
  polynomials into integers and perform several modular operations with
  machine integer or floating point arithmetic. 
  The modular 
  polynomials are converted into integers using Kronecker substitution (evaluation at a sufficiently large integer).
  With some control on the sizes and degrees, arithmetic operations on the polynomials can be performed
  directly with machine integers or floating point numbers and the number of 
  conversions can be reduced.
  We also present efficient ways to recover the modular values of the coefficients.
  This leads to practical gains of quite large constant factors for polynomial
  multiplication, prime field linear algebra and small extension field
  arithmetic.
\end{abstract}


\section{Introduction}
While theoretically well understood, the basic routines for linear algebra or polynomial arithmetic over finite fields are difficult to implement efficiently. Important factors of speed can be gained by exploiting machine integer or floating-point arithmetic and taking cache effects into account. This has been demonstrated for instance by~\citeauthor{jgd:2002:fflas} (\citeyear{jgd:2002:fflas,jgd:2004:ffpack}) who wrapped cache-aware routines for efficient small finite field linear algebra in their FFLAS/FFPACK project.

The elements of small enough prime fields can be represented as integers fitting in a machine word or half-word, or in exact floating point numbers. For extension fields, the elements can be represented as polynomials over a prime field. A further compression is proposed by~\citeauthor{jgd:2002:fflas} (\citeyear{jgd:2002:fflas,jgd:2008:issac}), using Kronecker substitution~\citep[\S 8.4]{VonzurGathen:1999:MCA}: the polynomials are represented by their value at an integer~$q$ larger than the characteristic of the field. We call \dqt, for Discrete Q-adic Transform,  this simple map:
\begin{equation}\label{eq:dqt}
\begin{split}	
	\operatorname{DQT}:\Z/{p\Z}[X]&\rightarrow\Z\\
	\sum_{i=0}^k{\alpha_iX^i}&\mapsto\sum_{i=0}^k{\alpha_iq^i}.
\end{split}
\end{equation}
%
With some care, in particular on
the size of $q$, it is possible to map the operations in the extension
field into the floating point arithmetic realization of this $q$-adic
representation. 

In computational number theory, 
matrices over $\mathbb{F}_2$ are often compressed by fitting several entries into one 
via the binary representation of machine integers~\citep{Coppersmith:1993:SLE,Kalto-Lobo}.
The need for efficient matrix computations over {very small} finite fields also arises in other areas and in particular graph theory (adjacency matrices), see e.g., 
\citep{May:2007:graphs} or 
\citep{Weng:2007:paley}. In these cases, one is thus led to work with polynomials as in~\eqref{eq:dqt}, using a small~$q$.

Recovering the results is obtained by an inverse \dqt. This consists of a radix conversion
\cite[Algorithm 9.14]{VonzurGathen:1999:MCA}, followed by a reduction  of the coefficients modulo~$p$.
On modern processors, machine division or
remaindering, that are used in radix conversion, are quite slow when compared to other arithmetic
operations\footnote{On a Xeon 3.6GHz using doubles, addition, multiplication and axpy take roughly the same time, while division is 10 times slower and fmod (floating point remainder) again 2.5 times as long as division}. 

In this article, we use two techniques to avoid some of these remainderings:
\emph{(i)} delayed reduction; \emph{(ii)} simultaneous reduction.
Delayed reduction means that after having been transformed into their \dqt form, polynomials may undergo several arithmetic operations before being converted back into coefficient form. This is possible as long as the coefficients remain smaller than~$q$, which is prevented by \emph{a priori} bounds. Simultaneous reduction is the operation of recovering~$\alpha_i\bmod p$ from~$\alpha_i$, for $i=1,\dots,k$, in the context of the inverse \dqt. 
We propose a new
algorithm called REDQ performing these $k$ modular reductions by a single division,
$\lceil \frac{k+1}{2} \rceil$ additions and multiplications and some
table look-up. We also discuss the possibility of replacing the remaining division by floating point operations, taking into account the rounding modes.

We recall in Section~\ref{sec:galois} the Kronecker
substitution and delayed reduction algorithms. 
Then we present our new simultaneous reduction algorithm and give its
complexity in Section~\ref{sec:redq} and we discuss how to replace
the remaining machine division by floating point operations with
different rounding modes in Section~\ref{sec:floor}.
Our new reduction algorithm has two parts. 
The first one is a \emph{compression},
performed by arithmetic operations, which reduces the size of the
polynomial entries. 
The second one is performed only when required and is a {\em
  correction}, 
which gives to the residues their
correct value modulo $p$  when the compression has shifted the result. 
We also present a time-memory trade-off enabling
some table look-ups computing this correction. 
Then we apply the \dqt~to different contexts:
modular polynomial multiplication in Section~\ref{sec:del};
linear algebra over small extension fields in Section~\ref{sec:fflas}; 
compressed linear algebra over small prime fields in Section~\ref{sec:cmm}.
This gives some constraints on the possible choices
of $q$ and $k$.
In the three applications anyway, we show that these compression techniques
represent a speed-up factor usually of the order of the number
$k$ of residues stored in the compressed format. 

Preliminary versions of this work have been presented by
\cite{jgd:2008:issac} and \cite{jgd:2008:mica}. Here, we give an
improved version of the simultaneous reduction where the number of
operations has been divided by two. We also give a complete study of
the behavior of the division of integers by
floating point routines, depending on the rounding modes. 
Finally, we present more experimental
results and faster implementations of the applications, namely a
Karatsuba version of the polynomial multiplication and a right
compressed matrix multiplication.




\section{Q-adic Representation of Polynomials}\label{sec:galois}
\subsection{Kronecker Substitution}              
The principle of Kronecker substitution is very simple.
It consists in evaluating polynomials at a given integer as in Eq.~\eqref{eq:dqt}.
For instance, for $k=2$, the substitution is performed by the following compression:
\begin{verbatim}
double& init3( double& r, const double u, const double v, const double w) {
        // _dQ is a floating point storage of Q
        r=u; r*=_dQ; r+=v; r*=_dQ; return r+=w;
}
\end{verbatim}
   
The integer $q$ can be chosen to be a power of 2 in a binary
architecture. Then the
Horner like evaluation 
of a polynomial at $q$
is just a left shift. One can then compute
this shift with exponent manipulations in floating point arithmetic
and use native shift operators (e.g., the $<\!\!<$ operator in C)
as soon as values are within
the $32$ (or $64$ when available) bit range. 

\medskip
The motivation for this substitution is its use in multiplication.
\begin{exmp}
To multiply $a=X+1$ by $b=X+2$ in $\pF{3}[X]$ one can use
the substitution $X=q:=100$; compute $101 \times 102 = 10302$; use radix
conversion to write $10302=q^2+3q+2$; reduce the coefficients modulo $3$ to get $a
\times b = X^2+2$.
\end{exmp}

More generally, if $p$ is prime,  $a = \sum_{i=0}^{k-1} a_i X^i$
and $b = \sum_{i=0}^{k-1} b_i X^i$
are two polynomials in $\pF{p}[X]$, then one can perform
the polynomial multiplication $ a b $ via Kronecker substitution.
The product of 
$\tilde{a} = \sum_{i=0}^{k-1} a_i q^i$
and $\tilde{b} = \sum_{i=0}^{k-1} b_i q^i$ is given by
\begin{equation}\label{eq:mult}
\widetilde{a b} = \sum_{j=0}^{2k-2} \left( \sum_{i=0}^{j} a_i
b_{j-i} \right) q^j.
\end{equation}
Now if $q$ is large enough, the coefficient of $q^j$ does not exceed $q-1$. If moreover~$k$ is not too large, the product fits in a machine number (floating point number or integer).
Thus in that case, it is possible to evaluate $\tilde{a}$ and $\tilde{b}$ as machine
numbers, compute the product of these evaluations, and convert back to
polynomials by radix conversion. There just remains to perform
reductions of the coefficients modulo $p$.

We show in Section~\ref{sec:fflas} that one can also tabulate the
evaluations at $q$, and that one can access directly the required part
of the machine words (using e.g., bit fields and unions in C)
instead of performing a radix conversion. 

\subsection{Delayed Reduction}
With current processors, machine division (as well as modular reduction) 
is still much slower in general than
machine addition and machine multiplication. 
It is possible to replace the machine division by some other implementation, such as floating
point multiplication by the inverse \citep{Shoup:2007:NTL} or Montgomery reduction \citep{Montgomery:1985:MMT}; see
e.g., \citep{jgd:2004:dotprod} and references therein for more details.

It is also important to reduce the number of machine remainderings when performing modular
computations. This is achieved by using Kronecker substitution seldom and perform as many arithmetic operations as possible on the transformed values.

\subsection{Discrete Q-adic Transform}
The idea of the Q-adic transform is to combine Kronecker
substitution with delayed reduction. 

We call \dqt the evaluation of polynomials modulo $p$ at a
sufficiently large $q$. Since this is a ring morphism, products as well as additions can be performed on the transformed values.
We call \dqt inverse the radix conversion of a
$q$-adic expansion
followed by a modular reduction. For a given procedure $\star(a_1,\dots,a_m)$ performing only ring operations, the idea is to compute the \dqt of the $a_i$'s, perform $\star$ on these \dqt's and compute the inverse \dqt at the end. For appropriate choices of the parameters, this procedure recovers the exact value. As an example, we recall the dot product of~\cite{jgd:2002:fflas} in Algorithm~\ref{alg:dqt}.
\newcounter{horner}
\newcounter{radix}
\newcounter{modin}
\begin{algorithm}[ht]
\caption{Polynomial dot product by \dqt}\label{alg:dqt}
\begin{algorithmic}[1]
\REQUIRE Two vectors of polynomials $v_1$ and $v_2$ in $\pF{p}[X]^n$ of
degree less than $k$;
\REQUIRE a sufficiently large integer $q$.
\ENSURE $R \in \pF{p}[X] $, with $R = v_1^T\cdot v_2$.
\vspace{5pt}\newline{\underline{Polynomial to $q$-adic conversion}}\vspace{2pt}
\STATE Set $\widetilde{v_1}$ and $\widetilde{v_2}$ to the floating point vectors
of the evaluations at $q$ of the elements of $v_1$ and $v_2$.
\setcounter{horner}{\value{ALC@line}}
\COMMENT{Using e.g., Horner's formula}
\vspace{5pt}\newline{\underline{One computation}}\vspace{2pt}
\STATE Compute $\tilde{r} = \widetilde{v_1}^T\cdot\widetilde{v_2}$
\vspace{5pt}\newline{\underline{Building the solution}}\vspace{2pt}
\STATE $\tilde{r} = \sum_{i=0}^{2k-2} \widetilde{\mu_i} q^i$.
\setcounter{radix}{\value{ALC@line}}
\COMMENT{Using radix conversion} 
\STATE For each $i$, set $\mu_i = \widetilde{\mu_i} \bmod p$
\STATE Return $R = \sum_{i=0}^{2k-2} \mu_i X^i $
\setcounter{modin}{\value{ALC@line}}
\end{algorithmic}
\end{algorithm}       

The following bounds generalize those of
\citep[\S 8.4]{VonzurGathen:1999:MCA}:
\begin{thm}{\citep{jgd:2002:fflas}}
Let $\beta$ be the number of  mantissa bits
available within the machine numbers. 
If
\begin{equation}\label{eq:bounds}
q > n k (p-1)^2\quad\text{and}\quad (2k-1)\log_2(q) \leq \beta,
\end{equation} then
Algorithm~\ref{alg:dqt} is correct.
\end{thm}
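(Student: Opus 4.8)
The plan is to verify two independent invariants of Algorithm~\ref{alg:dqt}: that line~2 is carried out in machine arithmetic \emph{without any rounding}, so that $\tilde r$ is an exact integer, and that this integer, written in radix~$q$, has digits equal to the (not yet reduced) integer coefficients of the dot product, so that the reduction modulo~$p$ in line~4 returns the coefficients of $v_1^T\cdot v_2$ over $\pF{p}$. The first invariant uses the size bound $(2k-1)\log_2(q)\le\beta$ in~\eqref{eq:bounds}; the second uses $q>nk(p-1)^2$.

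First I would fix, for each component $\ell=1,\dots,n$ and each $i<k$, the representatives $a_{i,\ell},b_{i,\ell}\in\{0,\dots,p-1\}$ of the coefficients of the $\ell$-th polynomials of $v_1,v_2$ (with $a_{i,\ell}=b_{i,\ell}=0$ for $i\ge k$), so that $\widetilde{(v_1)_\ell}=\sum_i a_{i,\ell}q^i$ and similarly for $v_2$. Since evaluation at $q$ is a ring morphism on $\Z[X]$, the product $\widetilde{(v_1)_\ell}\cdot\widetilde{(v_2)_\ell}$ equals $\sum_{j=0}^{2k-2}\bigl(\sum_{i=0}^{j}a_{i,\ell}b_{j-i,\ell}\bigr)q^j$ as an integer, and summing over $\ell$ gives the exact identity
\[
\tilde r=\sum_{j=0}^{2k-2}c_j q^j,\qquad c_j:=\sum_{\ell=1}^{n}\ \sum_{i=0}^{j} a_{i,\ell}\,b_{j-i,\ell}.
\]
Each coefficient of $X^j$ in a single product is a sum of at most $k$ terms, each at most $(p-1)^2$, so $0\le c_j\le nk(p-1)^2<q$ by the first hypothesis. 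Consequently $\tilde r<q^{2k-1}$, and by uniqueness of the base-$q$ expansion the integers $\widetilde{\mu_i}$ produced by the radix conversion in line~3 are exactly the $c_i$.

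Next I would argue that line~2 (and line~1) introduces no rounding. Every quantity that occurs — each $\widetilde{(v_i)_\ell}$ built by Horner at $q$, each product $\widetilde{(v_1)_\ell}\cdot\widetilde{(v_2)_\ell}$, and every partial sum of the accumulation forming $\tilde r$ — is a nonnegative integer. The partial sums are therefore nondecreasing and bounded above by $\tilde r<q^{2k-1}\le 2^\beta$ (the second hypothesis), and each individual product is likewise below $q^{2k-1}\le 2^\beta$, since its top radix-$q$ digit is at most $(p-1)^2<q$. An addition or multiplication of machine numbers whose exact result is a nonnegative integer strictly below $2^\beta$ is returned exactly, in any rounding mode, because that integer is representable and IEEE arithmetic returns the correctly rounded value. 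Hence the value stored in $\tilde r$ really is $\sum_j c_j q^j$.

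Finally, reduction modulo~$p$ is a ring morphism, so the coefficient of $X^j$ in $v_1^T\cdot v_2\in\pF{p}[X]$ is $\bigl(\sum_\ell\sum_i a_{i,\ell}b_{j-i,\ell}\bigr)\bmod p=c_j\bmod p=\mu_j$, and therefore $R=\sum_j\mu_j X^j$ is the required output. The carry count in the previous two paragraphs is routine; the one point that needs care is the exactness claim for the floating-point accumulation, where one must notice that nonnegativity makes the partial sums monotone and hence each bounded by the final value, so that the single bound $q^{2k-1}\le 2^\beta$ suffices no matter in what order line~2 performs the summation.
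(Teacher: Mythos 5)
The paper does not prove this theorem; it is stated as a citation to~\citep{jgd:2002:fflas}, so there is no in-manuscript proof to compare against. Judged on its own merits, your argument is correct and well organized, and it uses each hypothesis of~\eqref{eq:bounds} exactly where needed: the bound $q>nk(p-1)^2$ forces every radix-$q$ digit $c_j$ of the exact integer dot product to lie in $[0,q)$, so the radix conversion of line~3 returns the unreduced convolution coefficients; and $q^{2k-1}\le 2^\beta$, combined with monotonicity of nonnegative partial sums, ensures that every intermediate quantity in the floating-point computation of lines~1--2 is an integer strictly below $2^\beta$, hence representable and hence computed without error in any rounding mode. One small imprecision worth fixing: you justify that a single product $\widetilde{(v_1)_\ell}\cdot\widetilde{(v_2)_\ell}$ is below $q^{2k-1}$ by observing that its top radix-$q$ digit is at most $(p-1)^2<q$, but bounding the leading digit alone does not bound the whole number; the correct (and readily available) observation is that \emph{every} digit of a single product is at most $k(p-1)^2<q$, which gives the bound. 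The Horner intermediates deserve the same one-line treatment (they are monotone and dominated by $\widetilde{(v_i)_\ell}<q^k\le 2^\beta$). Neither point is a real gap.
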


\citeauthor{jgd:2002:fflas} (\citeyear{jgd:2002:fflas}, Figures 5 \& 6)
show that this wrapping is already a pretty good way to obtain high speed
linear 
algebra over small extension fields. 
They reach  high peak performance, quite close to those obtained with
prime fields, namely 420 Millions of finite field operations per second
(Mop/s) on a Pentium III, 735 MHz, and more than 500
Mop/s on a 64-bit DEC alpha 500 MHz. This is roughly 20\% below
the pure floating point performance and 15\% below the prime
field implementation. We  show in Section~\ref{sec:fflas} that the
new algorithms presented in this article enable to reduce this overhead
to less than 4\%.




\section{REDQ: Modular Reduction in the \dqt Domain}\label{sec:redq}
The first improvement we propose to the \dqt is to replace the costly
modular reduction of the polynomial coefficients (e.g., in Step~4 of Algorithm~\ref{alg:dqt}) by a \emph{single} division
by $p$ followed by several
shifts. In the next section, we replace this division by a multiplication by an inverse.

\subsection{Examples}
We first illustrate the basic idea on a simple example.
\begin{exmp}
Let $a=X^2+2X+3$ and $b=4X^2+5X+6$ be unreduced modulo $5$, so that we want to compute $a\times b = 4X^4+3X^3+3X^2+2X+3$.
We are free to choose the integer~$q$ at which the evaluation takes place in such a way that shifting by powers of~$q$ is cheap. For clarity we take here~$q=10000$.
Thus, we start from $\tilde{r}:=\tilde{a}\times\tilde{b}=\mathbf{4}00\mathbf{13}00\mathbf{28}00\mathbf{27}00\mathbf{18}$
for which
we need to reduce five coefficients modulo $5$. 
In the direct approach, the coefficients would be recovered by computing
\[4=0\times5+4,\quad 13=2\times5+3,\quad 28=5\times5+3,\quad27=5\times5+2,\quad18=3\times5+3.\]
The trick
is that we can recover all the quotients at once.
Indeed, we compute $s:=\lfloor \tilde{r}/p\rfloor=\mathbf{0}800\mathbf{2}600\mathbf{5}600\mathbf{5}400\mathbf{3}$ that contains
all the quotients (in boldface). The remainders (the coefficients)
can then be recovered as~$u_i:=\lfloor\tilde{r}/q^i\rfloor-p\lfloor s/q^i\rfloor$, for $i=0,\dots,4.$
\end{exmp}

Some more work is needed in order to make this idea correct in general.
\begin{exmp}\label{ex:full}We now consider 
%
%
%
%
the polynomial $R=1234X^3+5678X^2+9123X+4567$, the
prime $p=23$ and use $q=10^6$. Note that this time, $p$ does not divide~$q$.  
The polynomial we want to recover is $R\bmod p=15X^3+20X^2+15X+13$.
We start from~$\tilde{r}=1234005678009123004567$ and 
the division gives $s :=\lfloor \tilde{r}/p\rfloor=53652420783005348024$.
Proceeding as before, we get
\[u_0=15,\quad u_1=8,\quad u_2=18,\quad u_3=15.\]
These coefficients are small, but they are not the correct ones except for~$u_3$. 
Indeed, if $C=\alpha p+u_i$, then $C q=u_i q\neq0 \bmod p$ so that each coefficient needs to be corrected to take into account the values of the preceding ones.
We thus consider this first computation as a \emph{compression} stage and use a \emph{correction} stage that produces the correct values.
The correction is obtained by taking~$\mu_3=u_3$ and~$\mu_i=u_i-qu_{i+1}\bmod p$ for~$i=0,1,2$ and returning the~$\mu_i$'s.
%
\end{exmp}

\subsection{Algorithm}

\makeatletter
\newcounter{linerop}\newcounter{lineui}
\newcounter{lineqdend}\newcounter{lineqdbeg}\newcounter{linemui}
\begin{algorithm}[ht]
\caption{REDQ}\label{alg:REDQ}
\begin{algorithmic}[1]
\REQUIRE Two integers $p$ and $q$ satisfying Conditions (\ref{eq:bounds});
\REQUIRE $\tilde{r}  = \sum_{i=0}^d \widetilde{\mu_i} q^i \in \Z$.
\ENSURE $\rho \in \Z$, with $\rho = \sum_{i=0}^d \mu_i q^i$
where $\mu_i = \widetilde{\mu_i} \bmod p$.
\UNDERL{REDQ COMPRESSION}

\STATE\setcounter{linerop}{\theALC@line} $s = \left\lfloor \frac{\tilde{r}}{p} \right\rfloor$;
\FOR{$i=0$ to $d$}
\STATE\setcounter{lineui}{\theALC@line} $u_i = \left\lfloor \frac{\tilde{r}}{q^i} \right\rfloor -  p \left\lfloor \frac{s}{q^i} \right\rfloor$;
\ENDFOR
\UNDERL{REDQ CORRECTION} \COMMENT{when $p\nmid q$}
\STATE\setcounter{lineqdbeg}{\theALC@line} $\mu_{d}=u_{d}$
\FOR{$i=0$ to $d-1$}
\STATE\setcounter{linemui}{\theALC@line} $\mu_i = u_i-qu_{i+1} \bmod p$;
\ENDFOR\setcounter{lineqdend}{\theALC@line}
\STATE Return $\rho = \sum_{i=0}^d \mu_i q^i$;
\end{algorithmic}
\end{algorithm}
\makeatother

\begin{thm}\label{thm:REDQ}
Algorithm REDQ is correct.
\end{thm}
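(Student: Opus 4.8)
The plan is to track the two stages of Algorithm~\ref{alg:REDQ} in turn. In the \textsc{compression} stage I will show that the value $u_i$ computed in the loop is exactly the $q$-adic ``tail'' of $\tilde r$ from index $i$ on, reduced modulo $p$; in the \textsc{correction} stage I will show that forming $u_i-q\,u_{i+1}$ removes the unwanted higher-order contribution so that only $\widetilde{\mu_i}\bmod p$ remains.

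First I would fix notation. Conditions~\eqref{eq:bounds} ensure that $\tilde r$ is a genuine $q$-adic expansion, i.e.\ $0\le\widetilde{\mu_i}\le q-1$; hence $\sum_{j<i}\widetilde{\mu_j}q^{j}<q^{i}$ and $\lfloor\tilde r/q^{i}\rfloor=\sum_{j=0}^{d-i}\widetilde{\mu_{i+j}}q^{j}=:A_i$. In particular $A_0=\tilde r$, $A_d=\widetilde{\mu_d}$, and $A_i=\widetilde{\mu_i}+q\,A_{i+1}$.

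The heart of the argument is the claim $u_i=A_i\bmod p$. I would prove it with the nested-floor identity $\lfloor\lfloor x/a\rfloor/b\rfloor=\lfloor x/(ab)\rfloor$ (valid for positive integers $a,b$), which gives $\lfloor s/q^{i}\rfloor=\lfloor\lfloor\tilde r/p\rfloor/q^{i}\rfloor=\lfloor\tilde r/(pq^{i})\rfloor$. Writing $\tilde r=A_iq^{i}+B_i$ with $0\le B_i<q^{i}$ and $A_i=p\lfloor A_i/p\rfloor+(A_i\bmod p)$, one obtains $\tilde r/(pq^{i})=\lfloor A_i/p\rfloor+\frac{A_i\bmod p}{p}+\frac{B_i}{pq^{i}}$, where the last two terms sum to strictly less than $\frac{p-1}{p}+\frac1p=1$. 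Therefore $\lfloor s/q^{i}\rfloor=\lfloor A_i/p\rfloor$ and $u_i=A_i-p\lfloor A_i/p\rfloor=A_i\bmod p$. I expect this off-by-one control — verifying that the low part $B_i$, which is dropped when computing $\lfloor s/q^{i}\rfloor$, can never bump the quotient up by one — to be the only genuinely delicate point; the rest is bookkeeping.

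To finish, note that $u_j\equiv A_j\pmod p$ for every $j$, so from $A_i=\widetilde{\mu_i}+q\,A_{i+1}$ we get $u_i\equiv\widetilde{\mu_i}+q\,u_{i+1}\pmod p$ for $0\le i<d$, while $u_d\equiv\widetilde{\mu_d}\pmod p$. The latter shows that the assignment $\mu_d=u_d$ yields $\mu_d=\widetilde{\mu_d}\bmod p$; the former rearranges to $\widetilde{\mu_i}\equiv u_i-q\,u_{i+1}\pmod p$, which is exactly the value stored in $\mu_i$ by the correction loop. Hence the returned $\rho=\sum_{i=0}^d\mu_iq^{i}$ has $\mu_i=\widetilde{\mu_i}\bmod p$ for all $i$, which is the claim. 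Finally I would observe that when $p\mid q$ the term $q\,u_{i+1}$ is already $\equiv0\pmod p$, so $u_i\equiv\widetilde{\mu_i}\pmod p$ directly and the correction stage can be skipped — this is the content of the ``when $p\nmid q$'' annotation in the algorithm.
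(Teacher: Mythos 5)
Your argument matches the paper's proof: you use the same nested-floor identity (the paper's Lemma~\ref{lem:flfl}) to identify $\lfloor s/q^i\rfloor$ with $\lfloor \lfloor\tilde r/q^i\rfloor/p\rfloor$, establish $u_i\equiv\sum_{j\ge i}\widetilde{\mu_j}q^{j-i}\pmod p$, and then unwind the recurrence to justify the correction step. The only cosmetic difference is that your ``off-by-one control'' paragraph re-derives by hand the direction $\lfloor\tilde r/(pq^i)\rfloor=\lfloor A_i/p\rfloor$, which is already the other half of the same nested-floor identity you invoked, so it could be replaced by a second application of that lemma.
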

The following lemma is probably classical. We state and prove it here for completeness.
\begin{lem}\label{lem:flfl} For $r  \in \N$ and $a$, $b \in \N^*$,
$$\left\lfloor \frac{\left\lfloor
      \frac{r}{b}\right\rfloor}{a}\right\rfloor =
\left\lfloor \frac{r}{ab}\right\rfloor =
\left\lfloor \frac{\left\lfloor \frac{r}{a}\right\rfloor}{b}\right\rfloor.$$
\end{lem}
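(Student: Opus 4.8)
The plan is to prove the two claimed equalities by a standard "division with remainder at two levels" argument. I would start with the left equality, $\lfloor \lfloor r/b\rfloor / a\rfloor = \lfloor r/(ab)\rfloor$. Write $r = b\lfloor r/b\rfloor + s$ with $0\le s < b$, and then $\lfloor r/b\rfloor = a\lfloor \lfloor r/b\rfloor / a\rfloor + t$ with $0\le t < a$. Combining, $r = ab\lfloor \lfloor r/b\rfloor / a\rfloor + bt + s$, and the remainder $bt+s$ satisfies $0 \le bt + s \le b(a-1) + (b-1) = ab - 1 < ab$. By uniqueness of the quotient in Euclidean division of $r$ by $ab$, this forces $\lfloor \lfloor r/b\rfloor / a\rfloor = \lfloor r/(ab)\rfloor$. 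The right equality is the same argument with the roles of $a$ and $b$ swapped (or, alternatively, follows from the left equality together with the commutativity $ab = ba$).

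A slicker alternative I might use instead is the general fact that for any real $x$ and any positive integer $n$, $\lfloor \lfloor x\rfloor / n\rfloor = \lfloor x/n\rfloor$: apply this with $x = r/b$ and $n = a$ to get $\lfloor \lfloor r/b\rfloor / a\rfloor = \lfloor r/(ab)\rfloor$ directly, then swap $a$ and $b$ for the other side. Either route is short; I would probably present the explicit Euclidean-division version since it is self-contained and matches the "for completeness" spirit of the statement.

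There is essentially no main obstacle here: the only thing to be careful about is making sure the inequalities on the combined remainder are sharp enough to land strictly below $ab$ (the bound $b(a-1)+(b-1) = ab-1$ is exactly what is needed), and to note that $a,b \in \N^*$ guarantees the divisions are well defined and the quotient/remainder decomposition is unique. The hypothesis $r \in \N$ is used only to keep everything nonnegative so that the classical uniqueness statement for Euclidean division applies verbatim.
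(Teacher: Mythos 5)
Your proof is correct, and it takes a genuinely different (though equally elementary) route from the paper's. You build the two-stage Euclidean decomposition $r = ab\,q + (bt+s)$ with the combined remainder shown to lie in $[0,ab)$, and then invoke \emph{uniqueness} of the quotient in the division of $r$ by $ab$ to conclude $q=\lfloor r/(ab)\rfloor$. The paper argues in the opposite direction: it names the target $k=\lfloor r/(ab)\rfloor$, starts from the defining inequalities $kab\le r<(k+1)ab$, divides by $a$, uses that $kb$ and $(k+1)b$ are integers to pass the inequalities through the floor, and divides again by $b$ to land on $\lfloor\lfloor r/a\rfloor/b\rfloor=k$, with no appeal to a uniqueness statement at all. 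Your version makes the "two remainders combine into one remainder below $ab$" structure explicit, which is arguably more transparent about \emph{why} the identity holds; the paper's version is slightly leaner because it never names the intermediate remainders and needs only monotonicity of the floor and the integrality of $kb$. Both prove one equality and get the other by swapping $a$ and $b$, and both are sound; your alternative suggestion (the general real-variable fact $\lfloor\lfloor x\rfloor/n\rfloor=\lfloor x/n\rfloor$) is in fact closest in spirit to what the paper actually does.
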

\begin{pf}  
	Let $k=\left\lfloor {r}/{ab}\right\rfloor$, so that
$kab \leq r <  (k+1)ab$.
Then $kb \leq {r}/{a} < (k+1)b$ and since
$kb$ is an integer it follows that $kb \leq \left\lfloor
  {r}/{a}\right\rfloor < (k+1)b$. Dividing by~$b$ yields $\left\lfloor
  {\left\lfloor {r}/{a}\right\rfloor}/{b}\right\rfloor
=k$. The other side of the identity follows by exchanging~$a$ and $b$.
\end{pf}
\begin{pf}[of Theorem~\ref{thm:REDQ}]
First we prove that $0 \leq u_i < p$. 
Let~$T_i=\lfloor\tilde{r}/q^i\rfloor$. By Lemma~\ref{lem:flfl} $\lfloor s/q^i\rfloor=\lfloor T_i/p\rfloor$, so that $u_i=T_i-p\lfloor T_i/p\rfloor$ which proves the inequalities.
%

Next we compute the value of~$u_i$ by 
the following sequence of identities modulo~$p$:
\[u_i=T_i=\left\lfloor\frac{\tilde{r}}{q^i}\right\rfloor=\sum_{j=i}^{d}
\tilde{\mu}_j q^{j-i}=\sum_{j=i}^{d} \mu_j q^{j-i} \bmod p.\]
When~$q=0\bmod p$, we thus have $u_i=\mu_i\bmod p$ and the proof is complete.
Otherwise, in view of the previous identity, the correction step on Line~\thelinemui\ produces
\[u_i-q u_{i+1}=\sum_{j=i}^{d} \mu_j q^{j-i}-q\sum_{j=i+1}^{d} \mu_j q^{j-i-1}=\mu_i\bmod p.\]
%
\end{pf}



\begin{defn} We call REDQ$_k$ a simultaneous 
reduction of $k$ residues performed by Algorithm~\ref{alg:REDQ} 
(in other words $k=d+1$ if $d$ is the degree of the Kronecker
substitution).
\end{defn}

\subsection{Binary Case}
When $q$ is a power of $2$ and elements are represented using an
integral type, division by $q^i$ and flooring are a single
operation, a right shift, or direct bit fields extractions when available.
Moreover,
the
remainders can be computed independently and thus the loop of
REDQ\_COMPRESSION can be performed by only half of the required $k$
axpy (combined addition and multiplication, or fused-mac) as shown
below:
\begin{figure}[htbp]\center
\includegraphics[width=\textwidth*7/8]{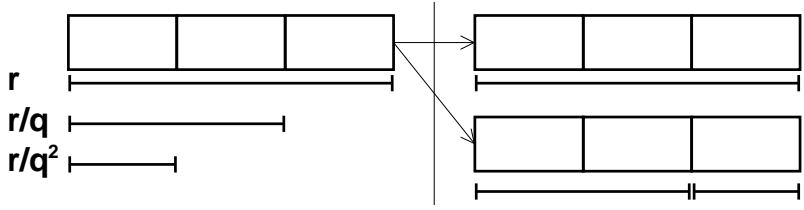}
\caption{REDQ$_3$\_COMPRESSION with 2 axpy}
\end{figure}

\begin{prop}\label{prop:redqcost}
Let $q$ be a power of two. Then, 
a REDQ$_k$\_COMPRESSION requires $\lceil \frac{k+1}{2} \rceil$
  machine word multiplications and additions.
\end{prop}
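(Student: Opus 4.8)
The plan is to analyze the arithmetic cost of REDQ\_COMPRESSION when $q = 2^\ell$ for some positive integer $\ell$, observing that in this setting division by $q^i$ together with flooring is free (it is a bit shift, or a bitfield extraction), so the only machine operations that count are the multiplications by $p$ and the additions/subtractions used to form the $u_i$. First I would recall from the proof of Theorem~\ref{thm:REDQ} that $u_i = T_i - p\lfloor T_i/p\rfloor$ where $T_i = \lfloor \tilde r/q^i\rfloor$, equivalently $u_i = \lfloor\tilde r/q^i\rfloor - p\lfloor s/q^i\rfloor$ with $s = \lfloor \tilde r/p\rfloor$; computing $s$ is the single division that is explicitly excluded from the count of the proposition (it is the "one division" highlighted in the introduction), so I focus on the loop.

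The key observation, already pictured in Figure~5, is that the $k$ values $u_0,\dots,u_{d}$ (with $k = d+1$) need not be extracted one shift at a time. Because $q$ is a power of two, two consecutive residues $u_i$ and $u_{i+1}$ occupy disjoint, adjacent bit windows both in $\tilde r$ and in $p\cdot s$ (the bound $q > nk(p-1)^2$ from~\eqref{eq:bounds} guarantees no carry crosses a $q$-boundary, so the windows really are independent). Hence a single machine word holding the two-digit chunk $\lfloor \tilde r/q^i\rfloor \bmod q^2$ can be combined with the corresponding two-digit chunk of $s$ by one multiplication by $p$ and one subtraction — i.e.\ one axpy — yielding $u_i + q\,u_{i+1}$ packed in one word, from which $u_i$ and $u_{i+1}$ are read off by masking at no arithmetic cost. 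Thus the $k$ residues are produced in pairs, each pair costing exactly one multiply-and-add. I would make this precise by writing, for $j = 0,1,\dots$, the quantity $w_j := \bigl(\lfloor \tilde r/q^{2j}\rfloor - p\,\lfloor s/q^{2j}\rfloor\bigr) \bmod q^2 = u_{2j} + q\,u_{2j+1}$, checking via the no-carry property that this equals the intended packed pair, and noting each $w_j$ is one axpy.

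It then remains to count the pairs. With $k$ residues indexed $0,\dots,k-1$, grouping them two at a time gives $\lfloor k/2\rfloor$ full pairs plus, when $k$ is odd, one leftover residue requiring its own single axpy; in both parities this totals $\lceil k/2\rceil$ axpy for the residue extraction. Adding the one multiplication-and-subtraction needed to reconstruct... actually, re-examining Figure~5 for REDQ$_3$, which the caption states uses $2$ axpy while $\lceil (3+1)/2\rceil = 2$: the bookkeeping I expect is that forming $p\cdot s$ and the chunk differences is best organized so that exactly $\lceil (k+1)/2\rceil$ multiply-adds suffice — one extra half-operation over the naive $\lceil k/2\rceil$ because the top chunk (which in REDQ produces $u_d = \mu_d$ directly) still needs its multiplication by $p$ but pairs with nothing. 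The main obstacle, and the step I would spend the most care on, is precisely this off-by-one accounting: making rigorous that the boundary residue $u_d$ is handled within the $\lceil (k+1)/2\rceil$ budget rather than costing a full additional axpy, and confirming against the $k=3$ picture that the formula $\lceil (k+1)/2\rceil$ — not $\lceil k/2\rceil$ — is the tight count. Once the pairing is set up correctly with the no-carry guarantee, the multiplication count matches the addition count operation-for-operation (each axpy is one of each), giving the stated bound.
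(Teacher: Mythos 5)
Your key identity is incorrect, and this sinks the whole argument. You define
\[
w_j := \Bigl(\bigl\lfloor \tilde r/q^{2j}\bigr\rfloor - p\bigl\lfloor s/q^{2j}\bigr\rfloor\Bigr) \bmod q^2
\]
and claim it equals $u_{2j} + q\,u_{2j+1}$. But from the proof of Theorem~\ref{thm:REDQ} together with Lemma~\ref{lem:flfl}, writing $T_i = \lfloor \tilde r/q^i\rfloor$, we have $\lfloor s/q^i\rfloor = \lfloor T_i/p\rfloor$, hence $\lfloor \tilde r/q^{2j}\rfloor - p\lfloor s/q^{2j}\rfloor = T_{2j} - p\lfloor T_{2j}/p\rfloor = u_{2j}$, which is \emph{already} an integer in $[0,p)$, so taking it modulo $q^2$ just returns $u_{2j}$. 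There is no $u_{2j+1}$ hiding in those bits. Your pairing scheme therefore does not produce two residues per axpy; each such expression produces only one.

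A related and equally fatal problem is that extracting $u_i$ is \emph{not} a ``two-digit'' ($2b_q$-bit) operation: it requires the full operands $T_i$ and $\lfloor s/q^i\rfloor$, which occupy $(k-i)b_q$ bits. The chunk sizes shrink as $i$ grows rather than all being equal to $b_q$, so pairing $u_{2j}$ with $u_{2j+1}$ does not keep you within any fixed $\beta$-bit budget (e.g.\ for $k=4$, $u_1$ alone needs $3b_q$ bits, so your predicted count $\lceil k/2\rceil = 2$ is actually unachievable). This is exactly where your off-by-one ``obstacle'' comes from; it cannot be fixed within your framework because the packing scheme itself is wrong, not the bookkeeping.

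The paper's argument is a bit-counting one. Chunk $i$ requires $(k-i)b_q$ bits and, since $0 \le T_i - p\lfloor s/q^i\rfloor < T_i$, the axpy stays inside its window, so several chunks can be placed side by side in a $\beta$-bit word with no carry crossing windows. The total is $b_q\sum_{i=0}^{k-1}(k-i) = b_q\,k(k+1)/2$, and because $kb_q\le\beta$, this is at most $\beta\lceil(k+1)/2\rceil$. The packing that achieves this pairs $u_i$ with $u_{k-i}$ (window sizes $(k-i)b_q$ and $ib_q$ summing to exactly $kb_q\le\beta$), with $u_0$ alone in the first word; this gives exactly $1 + \lceil(k-1)/2\rceil = \lceil(k+1)/2\rceil$ axpy, which also matches Figure~5's $k=3$, two-axpy picture. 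So the correct pairing is ``outside-in'' ($u_i$ with $u_{k-i}$), not consecutive, and once you make that change the count falls out without any off-by-one puzzle.
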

\begin{pf}
We use the notations of Algorithm~\ref{alg:REDQ}. Let 
$b_q=\log_2(q)$ be the number of bits of $q=2^{b_q}$ and let $\beta$
be the number of bits in the mantissa of a machine word.
If the REDQ$_k$ is correct, we have $q^k\leq 2^\beta$, or $b_q k \leq \beta$. 
The first value $u_0 = \tilde{r} - s \times p$ requires the whole mantissa. 
Now both $\lfloor \frac{\tilde{r}}{q^i} \rfloor$ and $\lfloor
\frac{s}{q^i} \rfloor$ can be stored on at most $k \times b_q -i
\times b_q$
bits. Moreover, by Theorem~\ref{thm:REDQ},
$0 \leq \lfloor \frac{\tilde{r}}{q^i} \rfloor- p\times \lfloor
\frac{s}{q^i} \rfloor <\lfloor \frac{\tilde{r}}{q^i} \rfloor$ 
so that the result does not overflow these $(k-i)b_q$
bits. This proves that the operations can be done independently on the
different parts of the machine words. Now, the total number of bits
required to perform the loop
of REDQ\_COMPRESSION is 
\[ b_q \sum_{i=0}^{k-1}  k-i = b_q \frac{k(k+1)}{2} \leq \beta
\left\lceil \frac{k+1}{2} \right\rceil.
\] 
This, combined with the non-overlapping of the operations, proves the
proposition.
\end{pf}

Here is an instance of a REDQ compression with 3 residues in C++,
using the syntactic sugar of bit field extraction:
\begin{verbatim}
inline void REDQ_COMP(UINT32_three& res,
                      const double r,                  // to be reduced
                      const double p){                 // modulo
    _ULL64_unions rll, tll;           // union of 64, 17/34 or 34/17 bits 
    rll._64 = static_cast<UINT64>( r );
    tll._64 = static_cast<UINT64>( r/p );              // One division 
    res.high = static_cast<UINT32>(rll._64-tll._64*p); // One axpy
    rll._17_34.low = rll._34_17.high;                  // Packing
    tll._17_34.low = tll._34_17.high;                  // Packing
    rll._64 -= tll._64*p;                              // Two axpy in one
    res.mid = static_cast<UINT32>(rll._17_34.high);
    res.low = rll._17_34.low;
}
\end{verbatim}

In general also the algorithm is efficient because one can precompute
$1/p$, $1/q$, 
$1/q^2$ etc. The
computation of each $u_i$ and $\mu_i$ can also be pipelined or
vectorized since they are independent.
As is, the benefit when
compared to direct remaindering by $p$ is that the corrections occur
on smaller integers. Thus the remaindering by $p$ can be faster.
Actually, another major acceleration can be added:
the fact that the $\mu_i$ are much smaller than the initial
$\tilde{\mu_i}$ makes it possible to tabulate the corrections as shown
next.


\subsection{Matrix Version of the Correction}     
In Example~\ref{ex:full}, the final correction can be written
as a matrix vector product:
\[ \mu = \left[ \begin{array}{cccc} 1 & 17 & 0 &
    0\\0&1 &17&0\\0&0&1&17\\0&0&0&1\end{array}\right] u \bmod p.
\]

More generally, the corrections of lines \thelineqdbeg~to
\thelineqdend~of Algorithm~\ref{alg:REDQ} are given 
by a matrix-vector multiplication with an invertible matrix $Q_k$:
{\small
\[
Q_k = \left[ \begin{array}{ccccc}
1       & -q    & 0     & \ldots        &0\\
0       & \ddots        & \ddots        & \ddots        &\vdots \\
\vdots  & \ddots        & \ddots        & \ddots        &0\\
\vdots  &       & \ddots        & \ddots        &-q\\
0       & \ldots        & \ldots        & 0     &1
\end{array}\right] 
.
\]}

\subsection{Tabulations of the Matrix-Vector Product and Time-Memory Trade-off}
Tabulating fully the multiplication by $Q_k$
requires a table of size at
least $p^{k+1}$. However, the matrices $Q_k$ can be decomposed recursively into smaller similar matrices as follows:

\centerline{\setlength{\unitlength}{2144sp}%
\begin{picture}(2453,1478)(572,-1833)
\thinlines
\put(2945,-381){\line( 1, 0){ 68}}
\put(3013,-381){\line( 0,-1){1440}}
\put(3013,-1821){\line(-1, 0){ 68}}
\put(1689,-367){\line(-1, 0){ 68}}
\put(1621,-367){\line( 0,-1){1440}}
\put(1621,-1807){\line( 1, 0){ 68}}
\put(1666,-1186){\framebox(765,765){}}
\put(2259,-1155){\makebox(0,0)[lb]{1}}
\put(1863,-1552){\makebox(0,0)[lb]{0}}
\put(2626,-755){\makebox(0,0)[lb]{0}}
\put(1244,-1140){\makebox(0,0)[lb]{=}}
\put(587,-1178){\makebox(0,0)[lb]{$Q_{i+j}$}}
\put(1735,-771){\makebox(0,0)[lb]{$Q_i$}}
\put(2460,-1533){\makebox(0,0)[lb]{$Q_j$}}
\put(2206,-1726){\framebox(765,765){}}
\end{picture}%
}
It is therefore easy to tabulate the product by~$Q_k$ with an adjustable table of size $p^j$.
\begin{prop}Let $q$ be a power of~2. Given a table of size~$p^j$ ($1\le j\le k+1$), a REDQ${}_k$CORRECTION can be performed with $\lfloor(k-1)/(j-1)\rfloor$ table accesses.
\end{prop}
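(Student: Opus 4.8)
The plan is to exploit the recursive block decomposition $Q_{i+j}$ displayed just before the statement, which says that $Q_{i+j}$ consists of a copy of $Q_i$ in its top-left $i\times i$ block, a copy of $Q_j$ in its bottom-right $j\times j$ block, overlapping in a single diagonal entry equal to $1$, with zeros elsewhere. First I would set $m=j-1$ (the number of \emph{off-diagonal} entries a block of size $j$ can absorb) and argue that a table indexed by $p^j$ values can store, for every possible input vector of length $j$, the output of multiplication by $Q_j$ modulo $p$; since each coordinate of $u$ lies in $\{0,\dots,p-1\}$ by Theorem~\ref{thm:REDQ}, a length-$j$ window of $u$ has exactly $p^j$ possible values, so one table access computes the action of one $Q_j$-block.

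Next I would describe the tiling. The matrix $Q_k$ has $k$ rows (recall $k=d+1$) and its action differs from the identity only through the $k-1$ superdiagonal entries equal to $-q$. Using the decomposition repeatedly, $Q_k$ factors as a product (equivalently, a "staircase" composition) of blocks of size $j$ that overlap in one coordinate each; $t$ such blocks cover $1+t(j-1)$ coordinates, hence cover all $k$ coordinates — equivalently all $k-1$ off-diagonal entries — as soon as $t(j-1)\ge k-1$, i.e. $t=\lceil (k-1)/(j-1)\rceil$. Applying $Q_k$ to $u$ then amounts to sweeping these overlapping windows across $u$, each sweep being a single table look-up that updates one coordinate (the one shared with the next block) and finalizes the others. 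I would note that because the off-diagonal band is width one, the windows can be arranged so that consecutive blocks share exactly the coordinate that the previous block has \emph{not} yet finalized, so no coordinate is touched twice in a way that would require an extra access.

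The count then needs a small sharpening to match the stated $\lfloor (k-1)/(j-1)\rfloor$ rather than $\lceil (k-1)/(j-1)\rceil$. The point is that the last window need not be a full $Q_j$-block: if $k-1 = t(j-1) + s$ with $0\le s < j-1$, the first $t$ look-ups handle $t(j-1)$ of the off-diagonal corrections, and the remaining $s < j-1$ corrections, together with finalizing the last coordinate, form a window of size $s+1 \le j$, which is still covered by the \emph{same} size-$p^j$ table (one just pads with the already-known finalized coordinate, or uses the sub-block structure of $Q_j$). Hence when $s=0$ one needs exactly $t=(k-1)/(j-1)$ accesses and when $s>0$ one needs $t$ accesses for the bulk but the trailing partial block is absorbed into the $t$-th (or handled free by the recursion on $Q_j$), giving $\lfloor (k-1)/(j-1)\rfloor$ in all cases. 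I would formalize this by induction on $k$: the base cases $k\le j$ use one access, and the inductive step peels off one $Q_j$-block from the top-left using the displayed decomposition with $i=j$, reducing to $Q_{k-(j-1)}$ on the remaining coordinates, which by induction costs $\lfloor (k-1-(j-1))/(j-1)\rfloor = \lfloor (k-1)/(j-1)\rfloor - 1$ accesses.

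The main obstacle I anticipate is the bookkeeping at the overlap: making precise that the shared diagonal entry between consecutive blocks is handled consistently, i.e. that the "$1$" on the overlapping diagonal in the block picture means the coordinate passed from one look-up to the next is not doubly reduced or doubly combined, and that the final partial block genuinely fits in the $p^j$ table. This is purely combinatorial once the block picture is taken as given, but it is where an off-by-one in the exponent ($\lfloor\cdot\rfloor$ versus $\lceil\cdot\rceil$) can creep in, so I would be careful to track exactly how many off-diagonal $-q$ entries each look-up consumes and to treat the $s=0$ and $s>0$ cases explicitly.
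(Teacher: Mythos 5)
The underlying idea — exploit the displayed block decomposition of $Q_{i+j}$ to sweep overlapping windows of width $j$ across the vector $u$, each window costing one look-up in the $p^j$ table and finalising $j-1$ of the $\mu_i$'s while sharing one coordinate with the next window — is exactly what the paper has in mind. (The paper in fact gives no proof: the picture of the recursive decomposition and Example~\ref{ex:tmto} are all that is offered.) Up to and including your derivation of $\lceil (k-1)/(j-1)\rceil$, the argument is sound and matches the paper's intent.

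The ``sharpening'' from $\lceil\cdot\rceil$ to $\lfloor\cdot\rfloor$ is where the argument breaks down. Write $k-1=t(j-1)+s$ with $0\le s<j-1$. After $t$ table accesses starting at positions $0,\,j-1,\,\dots,(t-1)(j-1)$ you have computed $\mu_0,\dots,\mu_{t(j-1)-1}$; the $s$ corrections $\mu_{t(j-1)},\dots,\mu_{k-2}$ each require $u_{i+1}$, which no earlier window has seen, so they are not ``absorbed into the $t$-th access'' — when $s>0$ you genuinely need one more look-up. Your proposed induction has the same leak: the base case says $T(k)=1$ for $k\le j$, but the closed form $\lfloor(k-1)/(j-1)\rfloor$ evaluates to $0$ whenever $1<k<j$, so the induction does not close. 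Unwinding $T(k)=1+T(k-(j-1))$ from the correct base ($T(1)=0$, $T(k)=1$ for $2\le k\le j$) yields $T(k)=\lceil(k-1)/(j-1)\rceil$, not the floor. The two quantities coincide when $(j-1)\mid(k-1)$, which is the situation of Example~\ref{ex:tmto} ($k=7$, $j=3$); the paper's floor is most plausibly a slip for a ceiling (note also that the stated bound gives $0$ at $j=k+1$, which cannot be right since at least one look-up is needed). I would either prove the $\lceil\cdot\rceil$ bound or add the hypothesis $(j-1)\mid(k-1)$; as written, the ``absorption'' step has no justification and should be removed.
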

When $q$ is a power of $2$,
the computation of the $u_i$ in the first part of Algorithm~\ref{alg:REDQ} requires $1~\text{div}$ and $({k+1})/{2}~\text{axpy}$
as shown in Proposition~\ref{prop:redqcost}.
Now, the time memory trade-off enables to compute
the second part efficiently.
%

\begin{exmp}\label{ex:tmto}
We compute the corrections for a degree $6$ polynomial. One can
tabulate the multiplication by $Q_6$, a $7\times 7$ matrix, or
actually, by only the first $6$ rows of $Q_6$, denoted by
$\underline{Q}_6$,
with therefore $p^7$ entries, each of size $6\log_2(p)$. Instead, one can
tabulate the multiplication by $\underline{Q}_2$, a $2\times 3$
matrix. To compute
$[\mu_0,\ldots,\mu_6]^T = Q_6 [u_0\ldots,u_6]^T = [\underline{Q}_6[u_0\ldots,u_6]^T,u_6] $ it is sufficient to use
three multiplications by $\underline{Q}_2$ as shown in the
following algorithm:
\begin{algorithm}[H]
\caption{$Q_6$ with an extra memory of size $p^3$}
\begin{algorithmic}[1]
\REQUIRE $[u_0\ldots,u_6] \in (\pF{p})^7$;
\REQUIRE a table $\underline{Q}_2$ of the associated $2\times 3$ matrix-vector
multiplication over $\pF{p}$.
\ENSURE $[\mu_0,\ldots,\mu_6]^T = Q_6 [u_0\ldots,u_6]^T$.
\STATE $a_0,a_1 = \underline{Q}_2 [u_0,u_1,u_2]^T$;
\STATE $b_0,b_1 = \underline{Q}_2 [u_2,u_3,u_4]^T$;
\STATE $c_0,c_1 = \underline{Q}_2 [u_4,u_5,u_6]^T$;
\STATE Return $[\mu_0,\ldots,\mu_6]^T= [a_0,a_1,b_0,b_1,c_0,c_1,u_6]^T$;
\end{algorithmic}
\end{algorithm}
\end{exmp}

\paragraph*{Note on Indexing.}
In practice, indexing by a tuple of integers mod $p$
is made by evaluating at $p$, as $\sum u_i p^i$. If more memory
is available, one can also directly index in the binary format
using $\sum u_i \left(2^{\lceil \log_2(p) \rceil}\right)^i$. On the
one hand all
the multiplications by $p$ are replaced by bit fields extraction. On the other
hand, this makes the table grow a little bit, from $p^k$ to
$2^{\lceil \log_2(p) \rceil k}$.

\medskip

In the rest of this article we
restrict to the case when $q$ is a power of two.



\section{Euclidean Division by Floating Point Routines}\label{sec:floor}
In the implementation of REDQ above, there remains one machine
division (Line~1). 
Since exact division is rather time-consuming on modern
processors, it is natural to try and replace it by floating point operations. 
If $r$ and $p$ are integers and
we want to compute $r/p$, 
then computing $r/p$ by a floating point
division with a rounding to nearest mode followed
by flooring produces the expected value~\cite[Theorem
1]{Lefevre:2005:div}.

Instead of a division, a multiplication
by a precomputed inverse of $p$ can be used (this is done e.g., in
NTL \citep[Theorem 1.5]{Shoup:2005:NTB}). However,
 in that case, the correctness is not guaranteed for all
representable $r$ (see \cite{Lefevre:2005:mul} for details). Thus, as
is done e.g. in
NTL, one has to make some additional tests and corrections. 
In this section we propose bounds on $r$ for which this correctness
is guaranteed, taking the rounding modes into account. Outside of these bounds, we show 
%
that a difference of (at most) 1 after the flooring is possible for some values
of $r$, $p$ and the rounding modes. This can be detected by
two tests on
the resulting residue (below $0$ or above $p$) as is done by
\cite{Shoup:2007:NTL}. We show that only one of these tests is
mandatory if the rounding modes can be mixed. 
The inverse of the prime can be
precomputed for each mode, which avoids the need for costly changes of modes.
This is summarized in Algorithm~\ref{alg:AFDIV} at the end of this section.

\subsection{Rounding Modes}
We assume that the floating-point arithmetic follows the IEEE~754
standard and we denote by $\operatorname{ulp}(z)$ the \emph{unit in the last place}
of $z$ for a floating-point number $z$ such that
\[ 2^{\beta-1} \le |z| \le 2^\beta - 1.\]
For each operation three rounding modes are possible (\emph{up}
$\bigtriangleup(\cdot)$, \emph{down} $\bigtriangledown(\cdot)$ and
\emph{nearest}\footnote{How ties are broken is irrelevant here.}
$\diamond(\cdot)$). If two computations take place at different times, each of them can be performed in a different rounding mode, so that we have $9$ cases to consider. 
It is worth
investigating all cases since changing rounding modes is a costly
operation and the FPU may be in a particular rounding mode due to
constraints in the surrounding code, or some rounding modes are simply
not available in the particular computing environment. 
Also, computing the best bounds enables to make the best of 
delayed reduction in modular computations.

\subsection{Floating Point Division}
Denoting by $r=kp+u$ the Euclidean division of $r$ by $p$ where $0\le
r \le 2^\beta -1$ we are interested to know under which conditions on
$r$ and $p$ Algorithm~\ref{alg:FDIV} returns the quotient~$k$, depending on the rounding modes $\circ_1$ and $\circ_2$.
\begin{algorithm}[ht]
\caption{FDIV}\label{alg:FDIV}
\begin{algorithmic}[1]
\REQUIRE One integer $r$ such that $0 \le r \le 2^\beta - 1$;
\REQUIRE one integer $p$ such that $1 \le p \le 2^\beta-1$; 
\REQUIRE two choices of rounding-modes $\circ_1$ and $\circ_2$.
\ENSURE $\lfloor \frac{r}{p} \rfloor$.
\STATE $invp \gets \circ_1(1/p)$
\STATE $x \gets \circ_2(r\cdot invp)$
\STATE Return $\lfloor x \rfloor$.
\end{algorithmic}
\end{algorithm}
\newcommand{\ulp}{\text{ulp}}

\subsection{Results}
Our results are summarized in Table~\ref{tab:fdiv}. 

The
column ``Range'' gives guaranteed bounds on the result. This shows which
tests and corrections may be needed in order to compute the expected
value $k$. The interval given in this column is optimal, in the sense that we have
examples where $\lfloor x \rfloor \neq k$ in each possible direction.

The column ``Bound on $r$'' gives a strict upper bound on $r$ under which
$\lfloor x \rfloor \le k$, in those cases where 
the result could indeed overflow. We do not know whether these bounds
are optimal; for some cases we could find a systematic family
of examples indexed by $\beta$ that reach the bounds asymptotically; 
other bounds could be approached by exhaustive search on small
$\beta$. These examples are not included here. We believe that
all bounds are optimal except for case 2 where a value closer to
$\frac{3}{8}$ (instead of $\frac{1}{3}$) could be
reached (take $\beta=2n+1$ and $p=2^{n}-1$,
$r= (3\cdot 2^{n-2}+3)p-1$). For our purpose
$\frac{1}{3}$ is close enough.

The column ``Lost bits'' gives a
simplified version of this bound: if $r$ fits in $\beta$ minus this number
of bits, then $\lfloor x \rfloor \le k$.

%
\begin{table}[ht]
\centering
\begin{tabular}{|c|c|c|c|c|c|}
\hline
Case & $\circ_1$ & $\circ_2$ & Range & Bound on $r$ & Lost bits\\
\hline
1    & $\bigtriangleup(\cdot)$ & $\bigtriangleup(\cdot)$ & $k \le \lfloor x
\rfloor \le k+1$ & $2^\beta/({4+2^{2-\beta}})$ & 3 \\
\hline
2    & $\bigtriangleup(\cdot)$ & $\diamond(\cdot)$ & $k \le \lfloor x
\rfloor \le k+1$ & $2^\beta/({3+2^{1-\beta}})$ & 2 \\
\hline
3    & $\bigtriangleup(\cdot)$ & $\bigtriangledown(\cdot)$ & $k \le \lfloor x
\rfloor \le k+1$ & $2^\beta/2$ & 1 \\
\hline
4    & $\diamond(\cdot)$ & $\bigtriangleup(\cdot)$ & $k \le \lfloor x
\rfloor \le k+1$ & $2^\beta/({3+2^{1-\beta}})$ & 2 \\
\hline
5    & $\diamond(\cdot)$ & $\diamond(\cdot)$ & $k-1 \le \lfloor x
\rfloor \le k+1$ & $2^{\beta-1}/({1+2^{-1-\beta}})$ & 2 \\
\hline
6    & $\diamond(\cdot)$ & $\bigtriangledown(\cdot)$ &  $k-1 \le \lfloor x
\rfloor \le k$ & -- & 0 \\
\hline
7    & $\bigtriangledown(\cdot)$ & $\bigtriangleup(\cdot)$ & $k-1 \le \lfloor x
\rfloor \le k+1$ & $2^\beta/2$ & 1 \\
\hline
8    & $\bigtriangledown(\cdot)$ & $\diamond(\cdot)$ & $k-1 \le \lfloor x
\rfloor \le k$ & -- & 0 \\
\hline
9    & $\bigtriangledown(\cdot)$ & $\bigtriangledown(\cdot)$ & $k-1 \le \lfloor x
\rfloor \le k$ & -- & 0 \\
\hline
\end{tabular}\\[1mm]
\caption{Possible values of $\lfloor x\rfloor$ and bounds on $r$ such that~$\lfloor x\rfloor\le k$}
\label{tab:fdiv}
\end{table}
\subsection{Proof of the Bounds on~$\lfloor x\rfloor$}
We denote by $\epsilon_1$ and $\epsilon_2$ the errors in rounding
as follows:
\[
invp	 = 	\frac{1}{p}(1 + \epsilon_1), \qquad
x	 = 	(r\cdot invp)(1 + \epsilon_2).
\]
Thus the value that is computed is
\begin{align}
 x   & = 	\frac{r}{p}(1+\epsilon_1)(1 + \epsilon_2) \nonumber \\
	& = 	k + \frac{u}{p} +(\epsilon_1 + \epsilon_2 +\epsilon_1\epsilon_2)\frac{r}{p} =:   k + R
	\label{eq:def_R}
\end{align}
where $R$ is the term we want to bound. For example $R<1$ means $\lfloor x \rfloor
\le k$.
Bounds on $\epsilon_1$ and
$\epsilon_2$ depend on the rounding mode, but in all cases we have
\begin{equation}\label{ineq:epsilon}
|\epsilon_i|\le
2^{1-\beta},\qquad i\in\{1,2\}.
\end{equation}
              
\begin{lemma}The result of Algorithm~\ref{alg:FDIV} is never off by more
than one unit, that is
\[ k-1 \le \lfloor x \rfloor \le k + 1.\]
\end{lemma}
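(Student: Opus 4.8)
The plan is to start from the exact expression $x = \frac{r}{p}(1+\epsilon_1)(1+\epsilon_2)$ obtained in~\eqref{eq:def_R} and prove the sharper statement $|x - r/p| < 1$. Since the Euclidean division gives $r/p = k + u/p$ with $0 \le u/p < 1$, this forces $x \in (k-1,\,k+2)$ and hence $k-1 \le \lfloor x\rfloor \le k+1$, which is what is claimed. The point to stress first is that the naive estimate coming from~\eqref{ineq:epsilon}, namely $|x - r/p| = |\epsilon_1 + \epsilon_2 + \epsilon_1\epsilon_2|\,\tfrac{r}{p} \le (2^\beta-1)(2^{2-\beta}+2^{2-2\beta}) < 4$, is far too weak (it would only give ``off by at most $4$''); so the argument has to exploit that $r$ and $p$ are integers with $r \le 2^\beta - 1$ and $p \ge 1$, not just the relative-error bounds.

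First I would dispose of $p \in \{1,2\}$: here $1/p$ is a power of two, hence exactly representable, so $\epsilon_1 = 0$ and $x = \circ_2(r/p)$; moreover $r/p \le (2^\beta-1)/2 < 2^{\beta-1}$, a range in which consecutive floating-point numbers are at distance at most $\tfrac12$, so in fact $r/p$ is itself representable and $x = r/p$, giving $\lfloor x\rfloor = k$.

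For $p \ge 3$ I would split $|x - r/p| \le |x - y| + |y - r/p|$, where $y := r\cdot invp$ is the exact (unrounded) product, and bound each term by $\tfrac12$. For the rounding term: from $invp \le \tfrac1p(1 + 2^{1-\beta}) \le \tfrac13\cdot\tfrac32 = \tfrac12$ (valid as soon as $\beta \ge 2$) one gets $y \le (2^\beta-1)/2 < 2^{\beta-1}$, so the floating-point grid near $y$ has spacing at most $\tfrac12$ and $|x - y| < \tfrac12$ for any of the three rounding modes. For the approximation term: $p \ge 3$ forces $1/p < \tfrac12$, so the leading bit of $1/p$ lies at position $\le -2$, the grid spacing there is at most $2^{-\beta-1}$, and therefore $|invp - 1/p| \le 2^{-\beta-1}$; multiplying by $r \le 2^\beta - 1$ yields $|y - r/p| \le \tfrac12 - 2^{-\beta-1} < \tfrac12$. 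Adding the two bounds gives $|x - r/p| < 1$ as desired, and since every inequality used holds in each rounding mode, this covers all nine cases at once.

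The main obstacle is precisely the realization that the relative-error bound~\eqref{ineq:epsilon} alone cannot give an off-by-one statement, and that one must instead (i) peel off the cases where $1/p$ is exactly representable, and (ii) pass to absolute, grid-spacing bounds, using $invp \le \tfrac12$ together with $r < 2^\beta$ to pin \emph{both} the error in $invp$ and the rounding error of $r\cdot invp$ strictly below $\tfrac12$. A routine point to check along the way is that no overflow or underflow occurs: $r$, $y$ and $x$ are all at most $2^\beta - 1$, and $1/p \ge 2^{1-\beta}$ stays in the normal range, so all the rounding-error estimates apply.
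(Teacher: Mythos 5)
Your proof is correct, but it takes a genuinely different route from the paper's. The paper works entirely in the relative-error parametrization: it writes $x = k + R$ with $R = u/p + (\epsilon_1+\epsilon_2+\epsilon_1\epsilon_2)\,r/p$, expands the bound $|\epsilon_i|\le 2^{1-\beta}$ into the estimate $R \le 1 + \tfrac1p\bigl(3-2^{2-2\beta}\bigr)$ to get $R<2$ for $p\ge 3$, and symmetrically $R\ge -1$ for $p\ge 4$; the small primes $p=2$ and $p=3$ are then handled by sharper, ad hoc bounds on $\epsilon_1$ (exactness of $1/2$, and the binary expansion of $1/3$). You instead split $|x-r/p|\le |x-y|+|y-r/p|$ with $y=r\cdot invp$ and control each piece by an \emph{absolute} grid-spacing bound: $y<2^{\beta-1}$ forces the spacing near $y$ to be at most $\tfrac12$, while $1/p<\tfrac12$ forces $|invp-1/p|\le 2^{-\beta-1}$ and hence $r|invp-1/p|<\tfrac12$. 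This avoids the $p=3$ case analysis entirely, handles all nine mode combinations in one stroke, and even yields the slightly sharper $|x-r/p|<1$. The trade-off is that the paper's relative-error bookkeeping is what it then reuses verbatim for the per-mode bounds on $r$ in Table~1, so its seemingly clumsier route pays off later.

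One small slip to fix: you write ``moreover $r/p\le (2^\beta-1)/2<2^{\beta-1}$'' while disposing of $p\in\{1,2\}$, but that inequality only uses $p\ge 2$. For $p=1$ one has $r/p=r$, which can be as large as $2^\beta-1$; the conclusion still holds, but for a different reason --- $r$ is an integer below $2^\beta$ and is therefore exactly representable outright, so $x=r=k$. It would be cleaner to treat $p=1$ by that one-line remark and reserve the spacing-at-most-$\tfrac12$ argument for $p=2$ alone.
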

\begin{proof}
First we show $R < 2$, which gives the upper bound.
This is obtained by injecting the inequalities~\eqref{ineq:epsilon}, in the definition~\eqref{eq:def_R} of~$R$:
\begin{align}
    R & \le  \frac{p-1}{p} + (2^{2-\beta}+2^{2-2\beta})\frac{r}{p} \label{ineq:R}\\
      & \le  1 -\frac{1}{p}+  (2^{2-\beta}+2^{2-2\beta})\frac{2^\beta-1}{p} \nonumber\\
      & =  1 - \frac{1}{p}(3 - 2^{2-2\beta}).\nonumber
\end{align}
Thus $R<2$ for $p\ge 3$. In the special case $p=2$, we have $\epsilon_1=0$
so that the result still holds.

Similarly, in the other direction we have
\begin{align*}
    R & \ge  -(2^{2-\beta} + 2^{2-2\beta})\frac{r}{p}\\
      & \ge  -\frac{1}{p}(4-2^{2-2\beta})
\end{align*}
and $R\ge -1$ for $p\ge 4$. For $p=2$, the results follows from $\epsilon_1=0$. For the last case, $p=3$, we first analyze more precisely the rounding error~$\epsilon_1$. The binary expansion of
$\frac{1}{3}=0.01010101010\ldots$ implies that
\[ \circ(\frac{1}{3}) = 
    \begin{cases}
    \frac{1}{3}(1+2^{-\beta-1}) & \text{if $\beta$ is even and }\circ(\cdot)\in\{\bigtriangleup(\cdot), \diamond(\cdot)\},\\
    \frac{1}{3}(1+2^{-\beta-1}) & \text{if $\beta$ is odd and }\circ(\cdot)\in\{\diamond(\cdot), \bigtriangledown(\cdot)\},\\
    \frac{1}{3}(1+2^{-\beta}) & \text{otherwise.}
    \end{cases}
\]
Thus in all cases $|\epsilon_1|\le 2^{-\beta}$. Using this better bound in the computation above completes the proof.
\end{proof}

\begin{lemma}In cases 1, 2, 3 and 4, $\lfloor x\rfloor\ge k$.
\end{lemma}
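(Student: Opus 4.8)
The plan is to show that in cases 1--4, the quantity $R$ defined in~\eqref{eq:def_R} is nonnegative, so that $x\ge k$ and hence $\lfloor x\rfloor\ge k$. Recall $R=u/p+(\epsilon_1+\epsilon_2+\epsilon_1\epsilon_2)r/p$, where $u\ge 0$. So it suffices to bound the error term $(\epsilon_1+\epsilon_2+\epsilon_1\epsilon_2)r/p$ from below by $-u/p$; since in the worst case $u=0$ we really want the error term to be $\ge 0$, which follows as soon as $1+\epsilon_1>0$, $1+\epsilon_2>0$, and $\epsilon_1+\epsilon_2+\epsilon_1\epsilon_2=(1+\epsilon_1)(1+\epsilon_2)-1\ge 0$, i.e. $(1+\epsilon_1)(1+\epsilon_2)\ge 1$.

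First I would observe that in cases 1--4 the first rounding $\circ_1$ is either \emph{up} or \emph{nearest}, and in each of these subcases I claim $\epsilon_1\ge 0$: for $\bigtriangleup(\cdot)$ this is immediate from the definition of rounding up (the computed $invp$ is $\ge 1/p$), and for $\diamond(\cdot)$ applied to $1/p$ one checks via the binary expansion of $1/p$ that the nearest floating-point value lies above $1/p$ — this is exactly the kind of case analysis already carried out in the previous lemma for $p=3$, and the same ``$0.0101\ldots$''-type argument shows $\circ(1/p)\ge 1/p$ whenever $\circ\in\{\bigtriangleup,\diamond\}$ for any $p$ not a power of two (and $\epsilon_1=0$ when $p$ is a power of two). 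Hence $\epsilon_1\ge 0$ throughout cases 1--4. Next, in cases 1--4 the second rounding $\circ_2$ is \emph{up}, \emph{nearest}, or \emph{down}; I only need $1+\epsilon_2>0$, which holds because $|\epsilon_2|\le 2^{1-\beta}<1$ by~\eqref{ineq:epsilon}.

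With $\epsilon_1\ge 0$ and $1+\epsilon_2>0$ we get $(1+\epsilon_1)(1+\epsilon_2)\ge 1+\epsilon_2$, so $x=\frac{r}{p}(1+\epsilon_1)(1+\epsilon_2)$ and I want to push a little further: actually I would handle cases 1, 2, 3 (where $\circ_1=\bigtriangleup$) and case 4 (where $\circ_1=\diamond$) slightly differently at the very end. In cases 1 and 2, $\circ_2\in\{\bigtriangleup,\diamond\}$, and there $\epsilon_2\ge 0$ as well — for $\bigtriangleup$ trivially, and for $\diamond$ one can argue that $x=\diamond(r\cdot invp)$ with $r\cdot invp\ge r/p\ge k$, so $x\ge\diamond(k)=k$ since $k$ is representable (as $0\le k\le r\le 2^\beta-1$). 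That last remark — that an integer $k$ in range is a floating-point number and rounding is monotone — in fact dispatches \emph{all} of cases 1--4 at once: $r\cdot invp\ge r/p\ge k$ because $invp\ge 1/p$, and then $x=\circ_2(r\cdot invp)\ge\circ_2(k)=k$ by monotonicity of rounding, whatever $\circ_2$ is. So $\lfloor x\rfloor\ge\lfloor k\rfloor=k$.

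I expect the main obstacle to be the claim $\epsilon_1\ge 0$, i.e. that rounding $1/p$ up or to nearest never produces a value strictly below $1/p$: ``up'' is by definition, but ``nearest'' requires knowing that the fractional tail of the binary expansion of $1/p$ after the $\beta$ significant bits is at most half an ulp, which is the delicate point and must be argued from the structure of $1/p=0.\overline{\,\ldots\,}$ in binary (as the excerpt does for $p=3$); one must also treat $p$ a power of two, where $\epsilon_1=0$, as a trivially favorable special case. Everything else is monotonicity of IEEE rounding plus the fact that the sought quotient $k$ is itself a representable integer.
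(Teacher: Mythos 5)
Your argument for cases 1, 2, 3 is essentially the paper's: rounding up $1/p$ gives $invp \ge 1/p$, hence $r\cdot invp \ge k$, and monotonicity of $\circ_2$ plus representability of $k$ finishes it. That part is fine.

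The gap is in case 4. Your proof rests entirely on the claim that $\diamond(1/p)\ge 1/p$ for every $p$ that is not a power of two, i.e.\ that the $\beta$-bit round-to-nearest of $1/p$ never falls below $1/p$. That claim is false. Take $p=7$: the normalized significand of $1/7$ is $1.\overline{001}_2$, with bit $b_i=1$ iff $i\equiv 0\pmod 3$. For $\beta=53$ the last stored bit is $b_{52}=0$ and the first discarded bit is $b_{53}=0$, so the discarded tail is strictly below half an ulp and round-to-nearest truncates: $\diamond(1/7)<1/7$, i.e.\ $\epsilon_1<0$. (The same happens already at $\beta=4$.) Once $invp<1/p$ you can no longer conclude $r\cdot invp\ge k$, and the ``monotonicity dispatches all of cases 1--4 at once'' shortcut collapses. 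The binary-expansion analysis you invoke by analogy with the $p=3$ computation in the previous lemma does not establish a sign for $\epsilon_1$; what it establishes there (and what holds in general for round-to-nearest) is only the magnitude bound $|\epsilon_1|\le 2^{-\beta}$.

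The paper handles case 4 differently precisely for this reason: it keeps only $|\epsilon_1|\le 2^{-\beta}$, derives $r\cdot invp > k(1-2^{-\beta})$, and then exploits the fact that in case 4 the second rounding is $\circ_2=\bigtriangleup$. By comparing $k(1-2^{-\beta})$ to $k^-$ (the largest float strictly below $k$, distinguishing whether $k$ is a power of two), it shows $r\cdot invp > k^-$, hence $\bigtriangleup(r\cdot invp)\ge k$. That argument genuinely uses the specific $\circ_2$ of case 4 and cannot be folded into the simple ``$\epsilon_1\ge0$ plus monotonicity'' scheme you propose. You need a separate treatment of case 4 along those lines.
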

\begin{proof}
In the first three cases, $invp$ is rounded up, and thus
\[    r\cdot invp  =  k(p\cdot invp) + u\cdot invp\ge k.\]
This implies that $\circ(r\cdot invp)  \ge k$
because rounding modes are monotone and $k$ is exactly representable.

In case~4, since $|\epsilon_1| \le 2^{-\beta}$, we have
\[ invp > (1 - 2^{-\beta})\frac{1}{p}. \]
Then
\[
r \cdot invp  =  (kp+u)invp
	      \ge  kp\cdot{}invp 
	      >  k(1-2^{-\beta}).
\]
Again, $k$ is an integer thus exactly representable. Denote by $k^-$ the largest floating-point
number that is strictly less than $k$:
\[
k^- = \begin{cases}
    k - \frac{1}{2}\ulp(k) & \text{if $k$ is a power of 2,}\\
    k - \ulp(k) & \text{otherwise.}
    \end{cases}
\]
If $k$ is a power of $2$, then the way of rounding
$k(1-2^{-\beta})$ is the same as that of $1-2^{-\beta}$ since $k$ only
changes the exponent in the result. Since
$\bigtriangleup(1-2^{-\beta}+\delta) \ge 1$ for any $\delta > 0$ we
get $x = \bigtriangleup(r\cdot invp) \ge k$.
If $k$ is not a power of $2$, then
\[
    k^-  =  k - \ulp(k)  <  k(1 - 2^{-\beta})<
    r\cdot invp\]
and thus again $x=\bigtriangleup(r\cdot invp) \ge  k$.
\end{proof}

\begin{lemma}In cases 6, 8 and 9, $\lfloor x\rfloor\le k$.
\end{lemma}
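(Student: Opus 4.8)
The plan is to mirror the argument used in the previous lemma (cases 1–4), but in the opposite direction: instead of showing $r\cdot invp \ge k$ before rounding, I will show $r\cdot invp < k+1$ and that the final rounding cannot push the value to or past $k+1$. The key quantity is again $R = x - k$ from~\eqref{eq:def_R}, and the goal is to prove $R < 1$ in cases 6, 8 and 9, which immediately gives $\lfloor x\rfloor \le k$.

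First I would dispatch case~9, where both roundings are down ($\circ_1=\circ_2=\bigtriangledown(\cdot)$). Here $invp \le 1/p$, hence $r\cdot invp \le r/p = k + u/p < k+1$; since rounding down is monotone and keeps the value below $k+1$ (indeed $\bigtriangledown(y) \le y$), we get $x < k+1$, so $\lfloor x\rfloor \le k$. Case~6 ($\circ_1=\diamond(\cdot)$, $\circ_2=\bigtriangledown(\cdot)$) and case~8 ($\circ_1=\bigtriangledown(\cdot)$, $\circ_2=\diamond(\cdot)$) are the ones that actually need the error analysis, because in case~6 the inverse may be rounded up and in case~8 the product may be rounded up, so the crude monotonicity argument no longer suffices. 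In both I would return to the bound~\eqref{ineq:R}: $R \le \frac{p-1}{p} + (2^{2-\beta}+2^{2-2\beta})\frac{r}{p}$. Plugging $r \le 2^\beta-1$ gives $R \le 1 - \frac1p(3 - 2^{2-2\beta}) < 1$ for $p\ge 3$ (the same computation as in the first lemma), and the special cases $p=2$ (where $\epsilon_1=0$, so $R = u/p + \epsilon_2\, r/p \le \frac12 + 2^{1-\beta}\frac{2^\beta-1}{2} < 1$) and $p=3$ (where $|\epsilon_1|\le 2^{-\beta}$ by the refined expansion of $1/3$) are handled exactly as before. Since in cases 6 and 8 one of the two roundings is \emph{down}, I should also check that the downward rounding does not by itself create a problem in the other direction — but that only affects the lower bound $k-1\le\lfloor x\rfloor$, which is already guaranteed by the previous lemma, so nothing extra is needed.

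The only mild subtlety — and the step I would treat most carefully — is making sure that in cases 6 and 8 the relevant rounding is applied to a quantity that is \emph{genuinely} below $k+1$ as a real number before rounding, so that a rounding-up step (the $\diamond$ in case 6 applied to $invp$, or the $\diamond$ in case 8 applied to the product) cannot land exactly on $k+1$ or above. This is precisely what the inequality $R<1$ on the already-rounded value $x$ delivers: the bound~\eqref{ineq:R} was derived using $|\epsilon_i|\le 2^{1-\beta}$ for \emph{both} errors regardless of direction, so it already accounts for the worst-case upward rounding in either slot. Hence the single estimate $R<1$ simultaneously covers cases 6, 8 and 9, and the proof reduces to quoting~\eqref{ineq:R} together with the $p\in\{2,3\}$ refinements established in the proof of the first lemma. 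I expect no real obstacle here; the work is entirely in recognizing that the earlier computation already does the job and in spelling out the two small-$p$ cases.
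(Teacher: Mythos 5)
Your proof has a genuine gap, and the root cause is that you took a misprinted intermediate equality in the paper at face value. Inequality~\eqref{ineq:R} with the generic bound $|\epsilon_i|\le 2^{1-\beta}$ gives, after substituting $r\le 2^\beta-1$,
\[
R \;\le\; 1-\frac1p + (2^{2-\beta}+2^{2-2\beta})\frac{2^\beta-1}{p}
   \;=\; 1 + \frac1p\bigl(3 - 2^{2-2\beta}\bigr),
\]
not $1-\frac1p(3-2^{2-2\beta})$. (Just expand: $(2^{2-\beta}+2^{2-2\beta})(2^\beta-1)=4-2^{2-2\beta}$.) This only gives $R<2$, i.e., $\lfloor x\rfloor\le k+1$ --- which is exactly the conclusion of the earlier lemma that holds in \emph{all nine} cases. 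It cannot give $R<1$. Your argument, if it worked, would prove $\lfloor x\rfloor\le k$ in every case, contradicting the ``Range'' column of Table~\ref{tab:fdiv} (cases 1--5 and 7 genuinely admit $\lfloor x\rfloor = k+1$, which is why they have a nontrivial ``Bound on $r$'' entry). That internal inconsistency should have been a red flag. Your $p=2$ check fails for the same reason: $\tfrac12 + 2^{1-\beta}\frac{2^\beta-1}{2}=\tfrac32-2^{-\beta}$, which is not below $1$.

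What actually makes cases 6, 8, 9 special is the \emph{sign} of the rounding errors, and that is what must be exploited. In cases 8 and 9, $invp=\bigtriangledown(1/p)\le 1/p$, so before the second rounding $r\cdot invp\le r/p = k+1-\tfrac{1}{p}-\tfrac{p-1-u}{p}\le k+1-\tfrac1p<k+1$; case 9 then follows because $\bigtriangledown$ only decreases. Case 6 uses $|\epsilon_1|\le 2^{-\beta}$ (nearest rounding of $1/p$) to get $r\cdot invp < r/p + 1/p \le k+1$, and again $\bigtriangledown$ of something $<k+1$ stays $<k+1$. The genuinely delicate case is 8: there the product rounding is to nearest, and $r\cdot invp < k+1$ alone does not prevent $\diamond(r\cdot invp)=k+1$. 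You need a sharper argument showing $r\cdot invp$ lies strictly below the midpoint between $(k+1)^-$ and $k+1$, split into the subcases ``$k+1$ is a power of $2$'' and ``$k+1$ is not a power of $2$'', as the paper does with a careful $\operatorname{ulp}$ estimate. Your proposal contains none of this directional reasoning or the midpoint analysis for case 8, so it does not establish the claim.
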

\begin{proof}
In case 6, we have $|\epsilon_1|\le 2^{-\beta}$. This implies
\[r\cdot{}invp  =  \frac{r}{p}(1+\epsilon_1)
\le \frac{r}{p}(1+2^{-\beta})
 <  \frac{r}{p} + \frac{1}{p}<k+1\]
and therefore $x  =  \bigtriangledown(r\cdot{}invp) < k+1$, whence 
$\lfloor x \rfloor  \le  k.$

In cases 8 and 9, since $invp =
\bigtriangledown(\frac{1}{p})$, we have
\begin{equation}\label{ineq:case8}
 r \cdot invp \le \frac{r}{p} \le k+1-\frac{1}{p}.
\end{equation}
In case~9, this is sufficient to conclude. Otherwise, as in case~4 above, denote by $(k+1)^-$ the floating-point number that is just below
$k+1$ 
and let $m$ be the midpoint between $k+1$ and $(k+1)^-$.  If we show
$r\cdot{}invp < m$, then $\diamond(r\cdot{}invp)\le (k+1)^-$ and
therefore $\lfloor x \rfloor \le k$.

First assume that $k+1$ is a power of $2$. In this case
$\ulp(k+1)=2^{1-\beta}(k+1)$ and
$m=k+1-\frac{1}{4}\ulp(k+1)=(k+1)(1-2^{-\beta-1})$.
Since by assumption $p  <  2^{\beta} $ and $r< 2^{\beta}$, 
so $p(k+1) \le r+p < 2^{\beta+1}$ and it follows that
\[    \frac{1}{p} >  (k+1)2^{-\beta-1}. \]
Together with~\eqref{ineq:case8} this gives
\[r\cdot{}invp <  (k+1)(1-2^{-\beta-1})=m.\]

Assume now that $k+1$ is not a power of $2$ (and $k\neq 0$). Then
$\ulp(k+1)=\ulp(k)\le 2^{1-\beta}k$, $m=k+1-\frac{1}{2}\ulp(k)$ and
$\frac{1}{p} >  k2^{-\beta} \ge \ulp(k)/2$. Thus finally,
\[   r\cdot{}invp  <    k+1-\frac{1}{2}\ulp(k)\le m.\]
If $k=0$ then $\frac{1}{2}\ulp(k+1)=2^{-\beta} < \frac{1}{p}$ and the
result follows.
\end{proof}

\subsection{Bounds on~$r$ making the result exact}
In cases~6, 8 and 9, the result of Algorithm~\ref{alg:FDIV} is smaller than~$k$ for all values of~$r$.
We now proceed to a case-by-case proof of each remaining row of Table~\ref{tab:fdiv}, giving bounds on~$r$ such that this happens.

\paragraph*{Case 1.} In this case $0 \le \epsilon_i \le 2^{1-\beta}$
for $i\in\{1,2\}$
and thus \eqref{ineq:R} becomes
\[ R \le \frac{p-1}{p} + (2^{2-\beta} + 2^{2-2\beta})\frac{r}{p}\]
so that $|R|<1$ is implied by
\[ r < 2^\beta\frac{1}{4+2^{2-\beta}} \]
that is close to ${2^\beta}/{4}$ and we lose less than 3 bits compared to
the bound $r<2^\beta$ required to have no loss of precision
on $r$.

\paragraph*{Case 2.}
In this case $|\epsilon_2| \le 2^{-\beta}$ and
\eqref{ineq:R} becomes
\[ R \le \frac{p-1}{p} + (3\cdot 2^{-\beta}+2^{1-2\beta})\frac{r}{p}
\]
The condition $R<1$ is implied by
\[ 3\cdot 2^{-\beta} + 2^{1-2\beta} < \frac{1}{r} \]
that is $r < 2^\beta/({3+2^{1-\beta}})$ which is close to
${2^\beta}/{3}$, and less that two bits are lost. 

\paragraph*{Case 3.}
In this case $-2^{1-\beta}\le \epsilon_2 \le 0$,
$|\epsilon_1 + \epsilon_2| \le 2^{1-\beta}$ and $\epsilon_1\epsilon_2 \le 0$.
We get
\[ R \le 1 - \frac{1}{p} + \frac{r}{p}2^{1-\beta} \]
and the condition $R<1$ is ensured by
\[ r < \frac{1}{2}2^\beta \]
which means we lose one bit.

\paragraph*{Case 4.} 
This is as in
case 2 since $\bigtriangleup(\cdot)$ and $\diamond(\cdot)$ play a
symmetric role in the analysis.

At this stage, we have obtained the following.
\begin{prop}In Cases 1--4, Algorithm~\ref{alg:FDIV} is correct for $r$ obeying the bounds of Table~\ref{tab:fdiv}.
\end{prop}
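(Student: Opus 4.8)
The plan is to combine the ingredients already assembled in the preceding paragraphs. Algorithm~\ref{alg:FDIV} returns $\lfloor x\rfloor$, and by the decomposition~\eqref{eq:def_R} we have $x = k + R$, where $r = kp + u$ is the Euclidean division of $r$ by $p$ (so $0\le u<p$). In particular the value the algorithm is meant to produce, $\lfloor r/p\rfloor$, is exactly $k$, so it suffices to show $\lfloor x\rfloor = k$ in each of Cases~1--4 under the corresponding bound on $r$ from Table~\ref{tab:fdiv}.

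First I would invoke the Lemma proved above asserting that in Cases~1, 2, 3 and 4 one has $\lfloor x\rfloor\ge k$, with no hypothesis on $r$. For the matching upper bound I would quote the paragraphs ``Case~1''--``Case~4'': in each of them the bound on $r$ recorded in the table was shown to force $R<1$, hence $x = k+R<k+1$ and therefore $\lfloor x\rfloor\le k$. Combining the two inequalities yields $\lfloor x\rfloor = k = \lfloor r/p\rfloor$, which is the assertion of the proposition.

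Since both ingredients are already in hand, there is essentially nothing left to do; the only point I would re-examine is the reduction of Case~4 to Case~2 ``by symmetry''. One must check that the sign and magnitude constraints used in Case~2, namely $0\le\epsilon_1\le 2^{1-\beta}$ and $|\epsilon_2|\le 2^{-\beta}$, are replaced in Case~4 by $|\epsilon_1|\le 2^{-\beta}$ and $0\le\epsilon_2\le 2^{1-\beta}$, and that in both cases these give the same bound $\epsilon_1+\epsilon_2+\epsilon_1\epsilon_2\le 3\cdot 2^{-\beta}+2^{1-2\beta}$ on the coefficient of $r/p$ in~\eqref{eq:def_R}. Once that is verified the Case~2 estimate applies verbatim, and the proposition follows.
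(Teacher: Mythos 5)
Your proposal is correct and takes essentially the same approach as the paper: the proposition is the direct conjunction of the preceding lemma (which shows $\lfloor x\rfloor\ge k$ in Cases 1--4 unconditionally) with the Case~1--4 paragraphs (which show $R<1$, hence $\lfloor x\rfloor\le k$, under the stated bounds on $r$), so $\lfloor x\rfloor=k=\lfloor r/p\rfloor$. Your extra verification that Case~4 really does reduce to Case~2 by symmetry — both yield $\epsilon_1+\epsilon_2+\epsilon_1\epsilon_2\le 3\cdot 2^{-\beta}+2^{1-2\beta}$ — is sound and matches the paper's one-line remark on that point.
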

The remaining cases are proved similarly:
\paragraph*{Case 5.}
We have $|\epsilon_1|\le 2^{-\beta}$, $|\epsilon_2|\le 2^{-\beta}$
and $|\epsilon_1+\epsilon_2+\epsilon_1\epsilon_2|\le 2^{1-\beta}+2^{-2\beta}$.
Then~\eqref{ineq:R} becomes
\[R \le 1 + \frac{1}{p}(1-2^{-\beta}-2^{-2\beta}).\]
The condition $R <1$ is implied by
$ r < 2^{\beta-1}/({1+2^{-1-\beta}})$
which is close to $\frac{1}{2}2^\beta$.

\paragraph*{Case 7.} The  bound on $r$ follows from case 3 as
$\epsilon_1$ and $\epsilon_2$ play a symmetric role in the error
analysis of case 3.

\subsection{Using Algorithm \ref{alg:FDIV}}
\begin{algorithm}[ht]
\caption{Applied FDIV}\label{alg:AFDIV}
\begin{algorithmic}[1]
\REQUIRE $r$, integer such that $0 \le r \le 2^\beta - 1$;
\REQUIRE $p$, integer such that $1 \le p \le 2^\beta-1$; 
\ENSURE $\lfloor \frac{r}{p} \rfloor$.
\vspace{5pt}\newline{\underline{Constants}}\vspace{2pt}
\STATE $B_\bigtriangleup \gets 2^\beta/({3+2^{1-\beta}})$
\STATE $B_\diamond \gets 2^\beta/({3+2^{1-\beta}})$
\STATE $B_\bigtriangledown \gets 2^{\beta-1}$
\vspace{5pt}\newline{\underline{Precomputation}}\vspace{2pt}
\STATE $invp_\bigtriangleup \gets \diamond(1/p)$
\STATE $invp_\diamond \gets \bigtriangleup(1/p)$
\STATE $invp_\bigtriangledown \gets \bigtriangleup(1/p)$
\vspace{5pt}\newline{\underline{Division}}\vspace{2pt}
    \STATE $x \gets  \circ(r \cdot invp_\circ) $
\STATE $y \gets \lfloor x \rfloor$
\vspace{5pt}\newline{\underline{Possible correction}}\vspace{2pt}
\IF{$r \ge B_\circ$} \label{alg:AFDIV:correction}
    \STATE $z \gets p\cdot y$
    \IF{$z > r$}
	\STATE $y \gets y - 1$
    \ENDIF
\ENDIF
\STATE Return $y$.
\end{algorithmic}
\end{algorithm}

Algorithm \ref{alg:AFDIV} demonstrates how to apply the results of
Table \ref{tab:fdiv} in a program. We precompute $1/p$ in several
rounding modes and use the best version in the subsequent
multiplication, depending on the current rounding mode $\circ(\cdot)$.
The strategy used here is to make sure $\lfloor x \rfloor \ge k$ after
the multiplication so only one test at most is needed for the
correction. In addition we choose the version that maximizes the bound
$B$.

It should be noted that there is no strategy in the choice of
$\circ_1(\cdot)$ that guarantees $\lfloor x \rfloor \le k$ for any
choice of $\circ_2(\cdot)$ (take $\circ_2(\cdot) =
\bigtriangleup(\cdot)$), meaning that the described strategy is indeed
the only one than minimizes the number of tests needed for the
correction.

In a typical application of Algorithm \ref{alg:AFDIV} where $r$ is the
accumulation of several products modulo $p$, the bound $B$ can be
interpreted in the numbers of operations that can be performed before
a reduction is necessary. If this number is not exceeded then the
correction is never needed.

%


\section{Application 1: Polynomial Multiplication}\label{sec:del}
\subsection{Delayed Reduction}
A classical technique 
for modular polynomial multiplication is to
use only delayed reductions. The idea is to compute each polynomial
coefficient by a delayed dotproduct e.g., as in \citep{jgd:2004:dotprod}:
products of the form $\sum_i a_i b_{k-i}$ are accumulated
without reductions, and the overflow is dealt with in one final pass. Thus, with a centered representation modulo $p$ for instance (integers from $({1-p})/{2}$ to
$({p-1})/{2}$), it is possible to accumulate at least $n_d$ products as long as
\begin{equation}\label{eq:delayed}
n_d (p-1)^2 < 2^{\beta+1}.
\end{equation}
The final modular reduction can be performed in many different ways
(e.g., classical division, floating point multiplication by the
inverse, Montgomery reduction, etc.), we just call the best one REDC
here. At worst, it is equivalent to 1~machine division.

\subsection{Fast Q-adic Transform}
We represent modular polynomials
of the form $P = \sum_{i=0}^N a_i X^i$ by
$P = \sum P_i \left( X^{d+1} \right) ^{i}$ where the $P_i$'s are 
polynomials of degree $d$ stored in a single integer in the $q$-adic way.

Then a product $PQ$ has the form
$$PQ=\sum \left(\sum P_i Q_{t-i}\right) \left(X^{d+1}\right)^{t},$$
where each multiplication $P_i Q_{t-i}$ is computed by Algorithm~\ref{alg:dqt} on a single machine integer.
The final reduction is performed by a tabulated REDQ and
can also be delayed as long as conditions
(\ref{eq:bounds}) are guaranteed.

\subsection{Comparison}
We use the following
complexity model:
multiplications and additions in the field are counted as
atomic operations while the machine divisions are counted separately. For instance,
we approximate REDC by one machine division and an axpy. 
We recall that REDQ$_k$ denotes a
simultaneous reduction of $k$ residues. 
In this complexity model a
REDQ$_k$ thus requires $1$ division and $k/2$ multiplications and
additions. A REDQ with $k$ residues is a Kronecker substitution with a
polynomial of degree $d=k-1$.
We also call $k$-FQT the use of a polynomial of degree
$d=k-1$ for the $q$-adic substitution. 
Thus a multiplication $P_i Q_{t-i}$ in a
$k$-FQT requires the reduction of $2d+1$ coefficients, i.e., a
REDQ$_{2d+1}=$REDQ$_{2k-1}$.

Let $P$ be a polynomial of degree $N$ with indeterminate $X$.
If we use a $(d+1)$-FQT, it will then become a polynomial of degree $D_q$
in the indeterminate $Y=X^{d+1}$, with
$$D_q = \left\lceil \frac{N+1}{d+1} \right\rceil -1.$$
Table~\ref{tab:mulpol} gives the respective
complexities of both strategies. The values of~$n_d$ and~$n_q$ are given by Equations~(\ref{eq:bounds},\ref{eq:delayed}):
\[n_d = \left\lfloor\frac{2^{\beta+1}}{(p-1)^2} \right\rfloor,\quad
n_q = \left\lfloor \frac{q}{(d+1)(p-1)^2} \right\rfloor~\text{with}~q=2^{\frac{m}{2d+1}}.\]

\begin{table}[ht]\center
\begin{tabular}{|l||c|c|}
\hline
& Mul \& Add & Reductions \\
\hline
Delayed & $(2N+1)^2$ & $(2N+1)\left\lceil \frac{2N+1}{n_d} \right\rceil$ REDC\\
d-\fqt    & $(2D_q+1)^2$ & $(2D_q+1)\left\lceil\frac{2D_q+1}{n_q}\right\rceil$ REDQ$_{2d+1}$ \\
\hline
\end{tabular}
\caption{Modular polynomial multiplication complexities.}\label{tab:mulpol}
\end{table}
\begin{exmp}
With $p=3$, $N=500$, $n_d$ is much larger than $2N+1$ and thus the classical delayed polynomial multiplication algorithm requires
$10^6$ multiplications and additions and 
$10^3$ remainderings. 

If we choose a double floating point
representation and a 4-FQT 
(i.e 4 coefficients in a word, or a degree 3 substitution), the fully
tabulated \fqt boils down
to $8.6\cdot 10^4$ multiplications and additions and $5.7\cdot 10^3$
divisions. On the one hand, the number of operations is therefore
reduced by a factor close to~11. On the other hand the delayed
 code can compute every coefficient with a
single reduction in this case, 
while the FQT code has to compute less coefficients, 
but breaks the pipeline.
\end{exmp}
\begin{exmp}
Even by switching to a larger mantissa, say e.g., 128 bits, so that the
\dqt multiplications are roughly 4 times as costly as double floating
point operations, the FQT can still be useful.

Taking $p=1009$ and choose $d=2$, this still gives around
$1.3\cdot 10^5$ multiplications and additions over 128
bits and $7\cdot 10^3$ divisions. The number of operations is still
reduced by a factor of~7. 
This should therefore still be faster than the
delayed multiplication over $32$ bits.
\end{exmp}

\begin{figure}[ht]\center
\includegraphics[width=\textwidth*6/13]{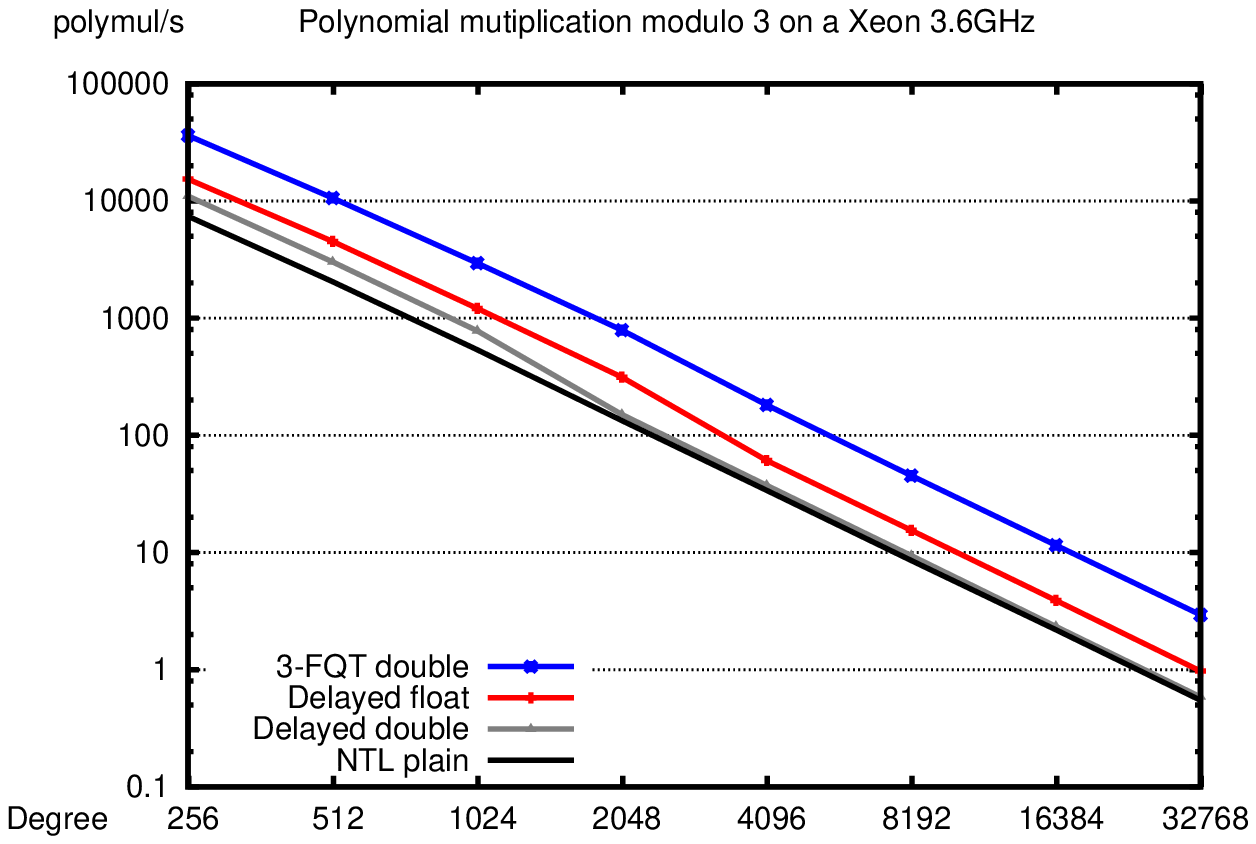}
\hfill\includegraphics[width=\textwidth*6/13]{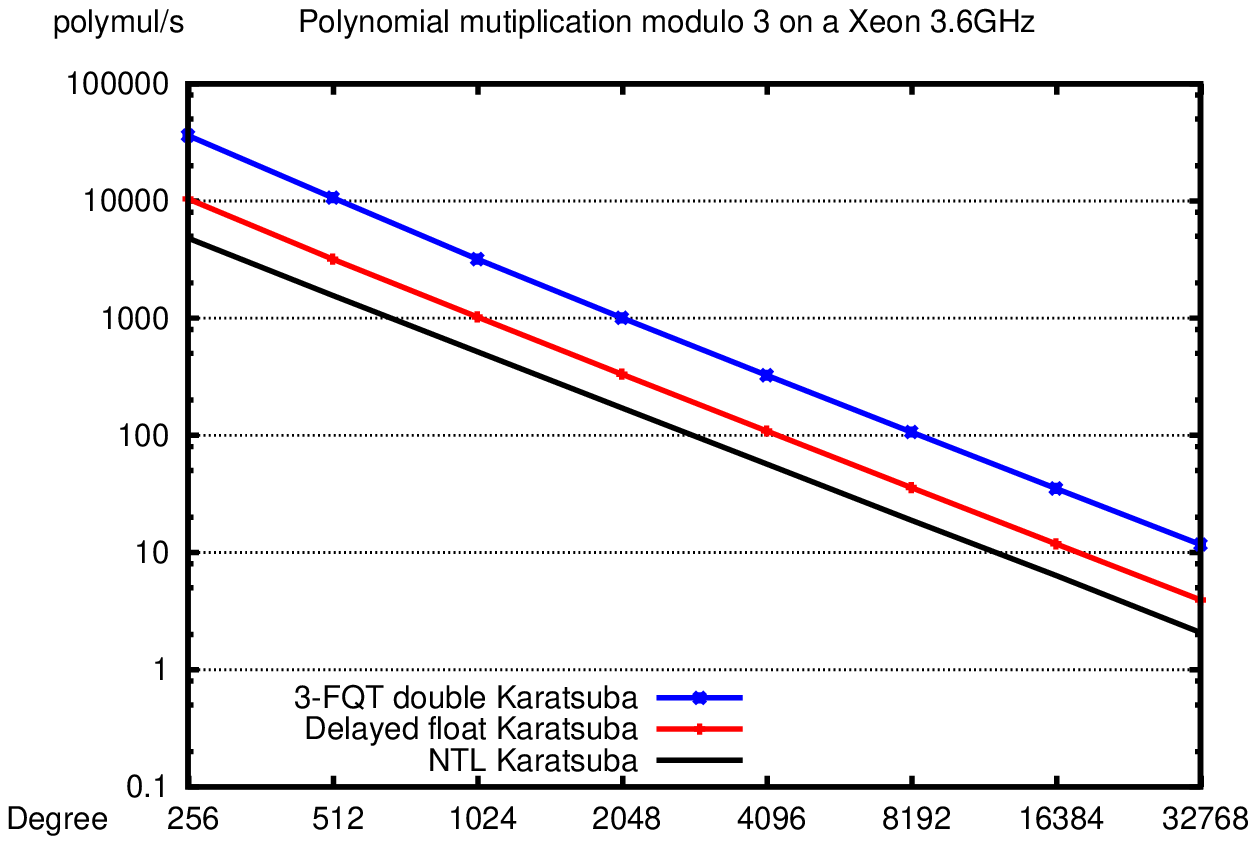}
\caption{Number of classical and Karatsuba polynomial multiplications modulo 3 per second on a Xeon
3.6 GHz (logarithmic scale)}\label{fig:pmkara}
\end{figure}


On Figure~\ref{fig:pmkara}, 
we compare our two implementations with that of NTL
\citep{Shoup:2007:NTL}. We see that
the \fqt is faster than NTL as long as the same algorithm is 
used. This shows that our strategy is very useful for small degrees and small primes;
not only for the classical algorithm (left) but also for subquadratic ones 
(right): the use of \fqt leads to a gain of 
an order of magnitude.
Note however that in the special case~$p=2$, NTL offers a very optimized 
implementation which is still an order of magnitude faster than our
general purpose implementation: specific binary routines, such as the
ones proposed by \citep{Weimerskirch:2003:gftwo}, enables to pack
coefficients as bits of machine words. 


\section{Application 2: Small Finite Field Extensions}\label{sec:fflas}
The isomorphism between finite fields of equal sizes gives  a canonical
representation:
any finite
field extension is viewed as the set of polynomials modulo a prime $p$
and modulo an
irreducible polynomial ${\cal P}$ of degree $k$.
Clearly we can thus convert any finite field element to its $q$-adic
expansion; perform the \fqt between two elements and then reduce the
polynomial thus obtained modulo ${\cal P}$.
Furthermore, it is possible to use floating point routines to perform
exact linear algebra as demonstrated by \cite{jgd:2009:toms}.

We use the strategy of \cite[Algorithm 4.1]{jgd:2002:fflas}:
convert
vectors over $\GF{p^k}$ to $q$-adic floating point; call a fast
numerical linear algebra routine (BLAS); convert the
floating point result back to the usual field representation.
We improve all the conversion steps
as follows:
\begin{enumerate}
\item replace the Horner evaluation of the polynomials, to form the
  $q$-adic expansion, by a single table look-up, recovering directly the
  floating point representation;
\item replace the radix conversion and the costly modular reductions
  of each polynomial coefficient, by a single REDQ operation;
\item replace the polynomial division by two table look-ups and
a single field operation.
\end{enumerate}

\begin{algorithm}[ht]
\caption{Fast Dot product over Galois fields via \fqt and \fqt inverse}
\label{alg:FGDP}
\begin{algorithmic}[1]
\REQUIRE A field $\GF{p^k}$ with elements represented as exponents of
a generator of the field;
\REQUIRE two vectors $v_1$ and $v_2$ of elements of $\GF{p^k}$;
\REQUIRE a sufficiently large integer $q$.
\ENSURE $R \in \GF{p^k}$, with $R = v_1^T\cdot v_2$.
\vspace{5pt}\newline{\underline{Tabulated $q$-adic conversion}}\vspace{2pt}
\newline\COMMENT{Use conversion tables from exponent to floating point evaluation}
\STATE Set $\widetilde{v_1}$ and $\widetilde{v_2}$ to the floating point vectors
of the evaluations at $q$ of the elements of $v_1$ and $v_2$.
\vspace{5pt}\newline{\underline{The floating point computation}}\vspace{2pt}
\STATE Compute $\tilde{r} = \widetilde{v_1}^T \cdot\widetilde{v_2}$;
\vspace{5pt}\newline{\underline{Computing a radix
    decomposition}}\vspace{2pt}
\STATE $r = REDQ\_COMPRESSION(\tilde{r},p,q)$;
\vspace{5pt}\newline{\underline{Variant of REDQ\_CORRECTION}}\vspace{2pt}
\newline\COMMENT{$\mu_i$ is such that $\mu_i = \widetilde{\mu_i} \bmod p$ for
$\tilde{r} = \sum_{i=0}^{2k-2} \widetilde{\mu_i} q^i$}
\STATE Set $L = representation( \sum_{i=0}^{k-2} \mu_i X^i )$.
\STATE Set $H = representation( X^{k-1} \times \sum_{i=k-1}^{2k-2} \mu_i X^{i-k+1} )$.
\vspace{5pt}\newline{\underline{Reduction in the field}}\vspace{2pt}
\STATE Return $R = H + L \in \GF{p^k}$;
\end{algorithmic}
\end{algorithm}
This is presented in Algorithm~\ref{alg:FGDP}.
Line~1 is
the table look-up of floating point values associated to elements of
the field; line 2 is the numerical computation; line 3 is the
first part of the REDQ reduction; lines 4 and 5 are a time-memory
trade-off with two table accesses for the corrections of REDQ, combined
with a conversion from polynomials to discrete logarithm
representation; the last line 6 combines the latter two results, inside
the field.
A variant of REDQ is used in Algorithm~\ref{alg:FGDP},
but $u_i$ still satisfies $u_i =
\sum_{j=i}^{2k-2} \mu_j q^{j-i} \bmod p$ as shown in
Theorem~\ref{thm:REDQ}. Therefore the representations of
$\sum \mu_i X^j$ in the field can be precomputed and stored in two tables
where the indexing will be made by $(u_0,\ldots,u_{k-1})$ and
$(u_{k-1},\ldots,u_{2k-2})$ and not by the $\mu_i$'s.

Thus, this algorithm approaches the performance of the prime
field wrapping also for small extension fields. Indeed,
suppose the internal representation of the extension field is already by
discrete logarithms and uses conversion tables from polynomial to index
representations (see e.g., \cite{jgd:2004:dotprod} for details).
Then we choose a time-memory trade-off for the REDQ operation of the
same order of magnitude, that is to say $p^k$.
The overall memory required by these new tables only
doubles and the REDQ requires only $2$ accesses.
Moreover, in the small extension, the polynomial multiplication must
also be reduced by an irreducible polynomial, ${\cal P}$.
This reduction can be precomputed
in the REDQ table look-up and is therefore almost free.
Moreover, many things can be factorized if the field representation is
by discrete logarithms. For instance, the elements are represented by their discrete
logarithm with respect to a generator of the field, instead of 
polynomials. In this case
there are already some table accesses
for many arithmetic operations, see
e.g., \citep[\S 2.4]{jgd:2004:dotprod} for details.

%
\begin{thm}\label{thm:fgdp}
Algorithm~\ref{alg:FGDP} is correct.
\end{thm}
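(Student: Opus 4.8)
The plan is to verify that Algorithm~\ref{alg:FGDP} computes the same value as the reference dot product over $\GF{p^k}$ by tracing the data through its six lines and reducing each step to a result already established in the excerpt. First I would fix notation: let $R = v_1^T\cdot v_2 \in \GF{p^k}$ be the intended output, and let $a,b\in\pF p[X]$ of degree $<k$ be the polynomial representatives of the entries, so that the honest computation forms $ab \bmod {\cal P} \bmod p$ (accumulated over the vector). Because the $q$-adic evaluation map $\operatorname{DQT}$ of~\eqref{eq:dqt} is a ring morphism and $q$ satisfies Conditions~\eqref{eq:bounds}, Line~1 followed by Line~2 produces exactly $\tilde r = \operatorname{DQT}\bigl(\sum a b\bigr)$ as an integer that does not overflow the mantissa; this is precisely the content of Theorem~\citep{jgd:2002:fflas} (the bounds reproduced before Algorithm~\ref{alg:dqt}), applied here with $2k-1$ coefficients. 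Hence $\tilde r = \sum_{i=0}^{2k-2}\widetilde{\mu_i}q^i$ with each $\widetilde{\mu_i} = \bigl(\sum_{j} a_j b_{i-j}\bigr)$ the unreduced coefficient of the polynomial product, and $\mu_i := \widetilde{\mu_i}\bmod p$ are the coefficients of the true product $ab$ in $\pF p[X]$.

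Next I would handle Lines~3--5. Line~3 is $\operatorname{REDQ\_COMPRESSION}$, which by Theorem~\ref{thm:REDQ} (its proof, in fact) returns values $u_0,\dots,u_{2k-2}$ with $0\le u_i<p$ and, crucially, $u_i \equiv \sum_{j=i}^{2k-2}\mu_j q^{j-i}\pmod p$ --- this identity is stated verbatim in the excerpt. So even though the $u_i$ are not the $\mu_i$, they determine the $\mu_i$ invertibly through the matrix $Q_{2k-1}$. The point of Lines~4--5 is that the two partial sums
\[
\sum_{i=0}^{k-2}\mu_i X^i \quad\text{and}\quad X^{k-1}\sum_{i=k-1}^{2k-2}\mu_i X^{i-k+1} = \sum_{i=k-1}^{2k-2}\mu_i X^i
\]
are each a $\pF p[X]$-linear function of, respectively, $(u_0,\dots,u_{k-1})$ and $(u_{k-1},\dots,u_{2k-2})$: indeed $\mu_i = u_i - q u_{i+1}\bmod p$ for $i\le k-2$ depends only on $u_i,u_{i+1}$, and more generally $\mu_i$ for $i\ge k-1$ is a combination of $u_{k-1},\dots,u_{2k-2}$ by back-substitution in $Q_{2k-1}$ (the block-triangular decomposition of $Q_{i+j}$ pictured in Section~\ref{sec:redq} makes this split explicit). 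Therefore the maps $(u_0,\dots,u_{k-1})\mapsto L$ and $(u_{k-1},\dots,u_{2k-2})\mapsto H$ can legitimately be tabulated, with $L,H$ already stored in the field's internal (discrete-log) representation and with the reduction modulo ${\cal P}$ folded into the tables, since $\deg\bigl(\sum_{i=k-1}^{2k-2}\mu_i X^i\bigr)\le 2k-2 < 2k$ so $X^{k-1}\cdot(\cdots)$ needs at most one reduction step that the table encodes.

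Finally, Line~6 returns $H+L$ inside $\GF{p^k}$. Since $\sum_{i=0}^{2k-2}\mu_i X^i = \bigl(\sum_{i=0}^{k-2}\mu_i X^i\bigr) + \bigl(\sum_{i=k-1}^{2k-2}\mu_i X^i\bigr)$ as polynomials, and reduction modulo ${\cal P}$ (and modulo $p$) is additive, we get $H + L = \bigl(\sum_{i=0}^{2k-2}\mu_i X^i\bigr)\bmod {\cal P} = ab \bmod {\cal P}$ in $\GF{p^k}$, which summed over the vector is exactly $R$. I would then remark that the whole argument goes through entry-by-entry and that linearity lets the single $\operatorname{REDQ}$ be applied once to the accumulated $\tilde r$ rather than per term, which is where the efficiency claim comes from.

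The main obstacle, and the place I would spend the most care, is justifying the splitting in Lines~4--5: one must check that $L$ really is a function of only the first $k$ of the $u_i$ and $H$ only of the last $k$, with no cross-dependence, despite $u_{k-1}$ appearing in both. This is not quite immediate from the recurrence $\mu_i = u_i - qu_{i+1}\bmod p$ alone --- that recurrence couples consecutive indices across the whole range --- so one genuinely needs the block-triangular structure of $Q_{2k-1}$ (or equivalently the observation that $\mu_i$ for $i\le k-2$ only looks forward to $u_{i+1},\dots$, and for the high block one reindexes from the top), together with the degree bound $2k-2<2k$ ensuring the modular reduction by ${\cal P}$ does not mix the blocks in an uncontrolled way. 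Everything else is bookkeeping on top of Theorem~\ref{thm:REDQ} and the $\operatorname{DQT}$ bounds.
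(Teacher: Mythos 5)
Your proof is correct and follows essentially the same route as the paper: compute the $u_i$ via the REDQ compression step, observe that the correction formula $\mu_i = u_i - qu_{i+1}\bmod p$ splits cleanly into a low block depending only on $(u_0,\dots,u_{k-1})$ and a high block depending only on $(u_{k-1},\dots,u_{2k-2})$, tabulate these two maps with the reduction modulo $\mathcal{P}$ folded into the tables, and conclude by additivity of reduction in the field.

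One remark on your closing paragraph, which undercuts the rest of your argument: you worry that the formula $\mu_i = u_i - qu_{i+1}\bmod p$ ``couples consecutive indices across the whole range'' and that ``one genuinely needs the block-triangular structure of $Q_{2k-1}$'' together with ``back-substitution.'' This is a false alarm. That formula is not a recurrence but an explicit expression, and it involves \emph{only} the two indices $i$ and $i+1$, precisely because $Q_{2k-1}$ is bidiagonal, not merely upper triangular. The split is therefore immediate: for $i\le k-2$ both $i$ and $i+1$ lie in $\{0,\dots,k-1\}$, and for $k-1\le i\le 2k-3$ both lie in $\{k-1,\dots,2k-2\}$, while $\mu_{2k-2}=u_{2k-2}$ by itself. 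You in fact stated this correctly a few lines earlier for the low block; there was no cross-dependence to dispel, no back-substitution is required, and the block decomposition picture in Section~\ref{sec:redq} is merely a compact way to display this observation, not an additional ingredient the proof ``genuinely needs.''
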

\begin{pf}
We have to prove that it is possible to compute $L$ and $H$ from
the $u_i$'s. We have $\mu_{2k-2}=u_{2k-2}$ and
$\mu_i = u_i-qu_{i+1} \bmod p$, for $i=0,\dots,2k-3$.
Therefore a precomputed table of $p^k$ entries, indexed
by $(u_0,\ldots,u_{k-1})$, can provide the representation of
$$L=\sum_{i=0}^{k-2} (u_{i}-qu_{i+1} \bmod p) X^i.$$
Another table with $p^k$ entries, indexed by
$(u_{k-1},\ldots,u_{2k-2})$, can provide the representation of
$$H=u_{2k-2}X^{2k-2}+\sum_{i=k-1}^{2k-3} (u_{i}-q u_{i+1} \bmod p) X^{i}.$$

Finally $R = X^{k-1} \times \sum_{i={k-1}}^{2k-2} \mu_i X^{i-k+1}  +
\sum_{i=0}^{k-2} \mu_i X^i $ needs to be reduced modulo the
irreducible
polynomial used to build the field. But, if we are given the
representations of $H$ and $L$ in the field, $R$ is then equal
to their sum inside the field, directly using the internal
representations.
\end{pf}

Table~\ref{tab:comp} recalls the respective complexities of the conversion
phase in both algorithms. Here, $q$ is a power of two
and the REDQ division is computed via the floating point routines
of Section~\ref{sec:floor}.
\begin{table}[ht]
\begin{center}
  \begin{tabular}[ht]{|l||c|c|c|}
\hline
   & Alg.~\ref{alg:dqt}& Alg.~\ref{alg:FGDP} \\
Memory  & $3p^k$ & $4p^k+2^{1+k \lceil \log_2 p\rceil}$\\
\hline
\hline
Axpy    & $0$  & $k$ \\
Div     & $2k-1$ & $0$\\
Table   & $0$  & $3$\\
Red     & $\geq 5k$  & $1$\\
\hline
  \end{tabular}
\caption{Complexity of the back and forth conversion between
  extension field and floating point numbers}\label{tab:comp}
\end{center}
\end{table}

\begin{figure}[ht]\center
\includegraphics[width=\textwidth*3/4]{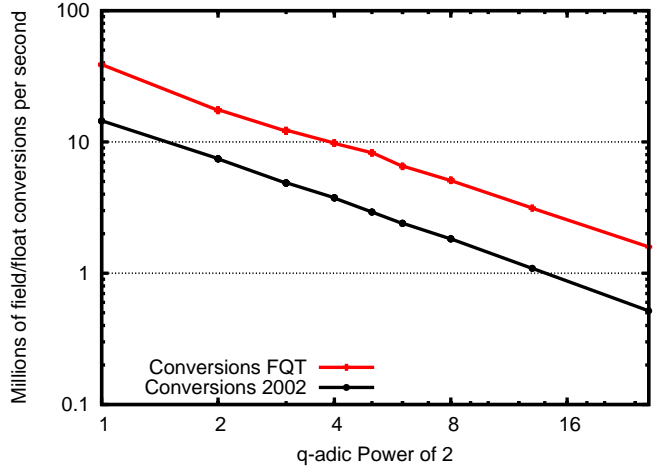}
\caption{Small extension field conversion speed on a Xeon 3.6GHz}\label{fig:conv}
\end{figure}
Figure~\ref{fig:conv} shows the speed of the conversion
after the floating point operations. The log scales prove
that for $q$ ranging from $2^1$ to
$2^{26}$ our new implementation is two to three
times as fast as the previous one\footnote{On a 32 bit Xeon.}.

\begin{figure*}[hbt]\center
\includegraphics[width=\textwidth*10/11]{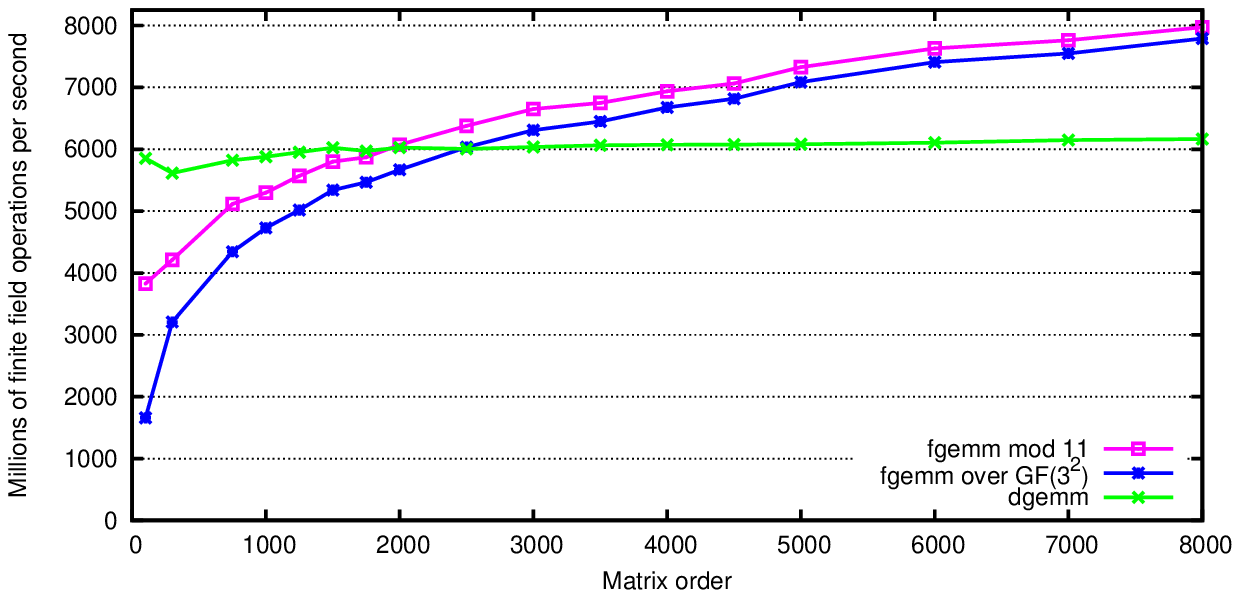}
\caption{Speed of finite field Winograd matrix multiplication on a XEON, 3.6 GHz}\label{fig:gfqgemm}
\end{figure*}
Furthermore, these improvements
allow the extension field routines to reach the
speed of 7800 millions of $\GF{9}$ operations per second\footnote{
On a XEON,                                                         
3.6 GHz, using Goto BLAS-1.09 {\tt dgemm} as the numerical routine
\citep{2002:gotoblas} and FFLAS {\tt fgemm} for the
fast prime field matrix multiplication \citep{jgd:2009:toms}.}
 as shown
on Figure~\ref{fig:gfqgemm}\footnote{The FFLAS routines are
available within the LinBox 1.1.4 library \citep{LinBox:2007} and the
\fqt is in implemented in the {\tt givgfqext.h} file of the
Givaro 3.2.9 library \citep{Givaro:2007}.}.
The speed-up obtained with these new implementations in also shown on this Figure.
It represents a reduction from the 15\% overhead of the
previous implementation to less than 4\% now, when compared over
$\GF{11}$.


\section{Application 3: Compressed Modular Matrix Multiplication}\label{sec:cmm}
We now extend the results of \cite{jgd:2008:mica} with the REDQ
algorithm. The idea is to use Kronecker substitution to pack several
matrix entries into a single machine word. We explore the
possibilities of packing on the left or on the right only, together
with packing on both matrices of a matrix multiplication.

\subsection{Middle Product Algorithm}
In this section, we show how
a dot product of vectors of size $d+1$ can be recovered from a
polynomial multiplication performed by a single
machine word multiplication.
This extends to matrix multiplication by compressing both matrices
first.
We first illustrate the idea for $2\times2$ matrices and~$d=1$.
The product
\[\begin{bmatrix}
  a & b \\
  c & d\\
\end{bmatrix}
\times
\begin{bmatrix}
  e & f\\
  g & h\\
\end{bmatrix}
=
\begin{bmatrix}
 ae+bg  &  af+bh \\
 ce+dg  &  cf+dh \\
\end{bmatrix}\]
is recovered from
\[
\begin{bmatrix}
  Qa+b \\
  Qc+d\\
\end{bmatrix}
\times
\begin{bmatrix}
  e+Qg & f+Qh\\
\end{bmatrix}
=
\begin{bmatrix}
 * + (ae+bg)Q + * \,Q^2  &  * + (af+bh)Q + * \,Q^2 \\
 * + (ce+dg)Q + * \,Q^2  &  * + (cf+dh)Q + * \,Q^2 \\
\end{bmatrix},
\]
where the character $*$ denotes other coefficients.

\begin{figure}[htbp]
\centerline{\includegraphics[width=\textwidth*4/11]{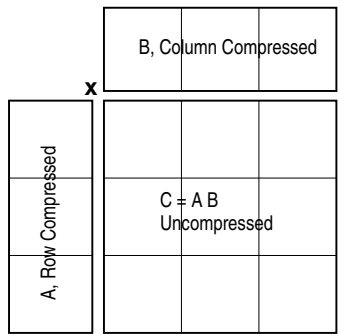}}
\caption{Compressed Matrix Multiplication (CMM)\label{fig:algo1}}
\end{figure}
In general, $A$ is an $m\times k$ matrix to be multiplied by a $k\times n$ matrix $B$,
the matrix $A$ is first compressed into a $m\times \left\lceil {k}/({d+1}) \right\rceil$
\verb!CompressedRowMatrix!, $CA$, and $B$ is transformed into a
$\left\lceil {k}({d+1}) \right\rceil \times n$
\verb!CompressedColumnMatrix!, $CB$. The compressed matrices are then multiplied and the result can be extracted from there. This is depicted on Fig.~\ref{fig:algo1}

In terms of number of arithmetic operations,
the matrix multiplication $CA \times CB$ can save \emph{a factor of
$d+1$} over the multiplication of $A\times B$ as shown on the $2\times 2$ case above.

The computation has three stages: compression, multiplication and extraction of the result. The compression and extraction are less demanding in terms of asymptotic complexity, but can still be noticeable for moderate sizes. For this reason, compressed matrices are often reused and it is more informative to distinguish the three phases in an analysis. This is done in Section~\ref{sec:comparison} (Table~\ref{tab:gains}), where
the actual matrix multiplication algorithm is also taken into account.


\paragraph*{Partial compression.}
Note that the last column of $CA$ and the last row of $B$ might not have $d+1$
elements if $d+1$ does not divide $k$. Thus one has to artificially
append some zeroes to the converted values. On $CB$ this
means just do nothing. On~$CA$ whose compression is reversed,
this means multiplying by $Q$ several
times.


\subsection{Available Mantissa and Upper Bound on $Q$ for the Middle Product}
If the product $CA\times CB$ is performed with floating point
arithmetic we just need that the coefficient of degree $d$ fits
in the $\beta$ bits of the mantissa. Writing $CA\times CB = c_H Q^d + c_L$, we see that this
implies that $c_H$, \emph{and only $c_H$}, must have entries that remain smaller than~$2^\beta$. It can then be recovered exactly by multiplication of $CA\times CB$ with
the correctly precomputed and rounded inverse of $Q^d$ 
as shown e.g., in \citep[Lemma~2]{jgd:2008:issac}.

With delayed reduction this means that
\[\sum_{i=0}^d \frac{k}{d+1} (i+1)(p-1)^2Q^{d-i} < 2^\beta.\]
On the other hand, delay reduction requires  (cf.~Eq~\eqref{eq:bounds})
\begin{equation}\label{eq:lower}k(p-1)^2<Q. \end{equation}
Thus the recovery is possible if
\begin{equation}\label{eq:upper} Q^{d+1}<2^\beta.\end{equation}
and a single reduction has to be made at the end of the dot product
as follows:
\begin{verbatim}
Element& init( Element& rem, const double dp) const {
        double r = dp;
        // Multiply by the inverse of Q^d with correct rounding
        r *= _inverseQto_d;
        // Now we just need the part less than Q=2^t
        unsigned long rl( static_cast<unsigned long>(r) );
        rl &= _QMINUSONE;
        // And we finally perform a single modular reduction
        rl %= _modulus;
        return rem = static_cast<Element>(rl);
}
\end{verbatim}

Note that one can avoid the
multiplication by the inverse of $Q$ when $Q$ is a power of~2, say~$2^t$:
by adding $Q^{2d+1}$ to the
final result one is guaranteed that the $t(d+1)$ high bits
represent exactly the $d+1$ high coefficients. On the one hand,
the floating point multiplication is then replaced by an addition.
On the other hand, this doubles the size of the dot product and thus
reduces by a factor of $\sqrt[d+1]{2}$ the largest possible dot
product size $k$.

\subsection{Middle Product Performance}

\begin{figure}[htb]
\centerline{\includegraphics[width=\textwidth]{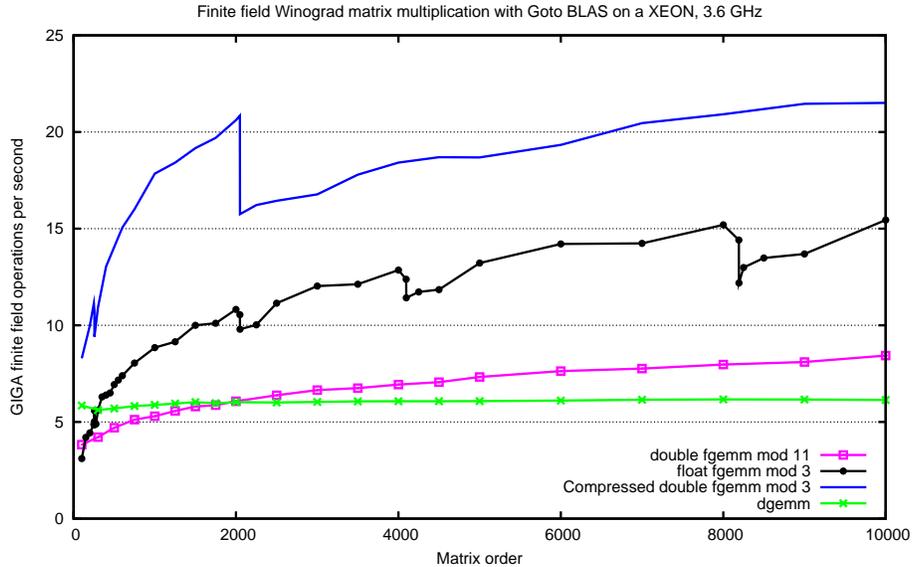}}
\caption{Compressed matrix multiplication 
compared with
\texttt{dgemm} (the floating point double precision
matrix multiplication of GotoBlas) and \texttt{fgemm} (the exact
routine of FFLAS) with double or single precision.
}\label{fig:compC}
\end{figure}
On Figure~\ref{fig:compC} we compare our compression algorithm to the
numerical double floating point
matrix multiplication \texttt{dgemm} of GotoBlas by \cite{2002:gotoblas}
and to the
\texttt{fgemm} modular matrix multiplication of the FFLAS-LinBox library by \cite{jgd:2002:fflas}. For the latter we show timings using \texttt{dgemm} and also \texttt{sgemm} over single floating points.
This figure shows that the compression ($d+1$) is very
effective for small primes: the gain over the double floating point
routine is quite close to $d$.

Observe that the curve of \texttt{fgemm} with underlying arithmetic on single
floats oscillates and drops sometimes. Indeed,
the matrix begins to be
too large and modular reductions are now required between the
recursive matrix multiplication steps.
Then the floating point
BLAS
routines
are used only when the sub-matrices are small
enough. One can see the subsequent increase in the number of classical
arithmetic steps on the drops around 2048, 4096 and 8192.

\begin{table}[htb]\center
\begin{tabular}{|c||c|c|c|c|c|c|c|c|c|}
\hline
Compression & 2 & 3..4 & 5..8 & 8 & 7 & 6 & 5 & 4 & 3\\
\hline
Degree d    & 1 & 5 &  9 & 7 & 6 & 5 & 4 & 3 & 2\\
\hline
Q-adic      & $2^3$ & $2^4$ & $2^5$ & $2^6$ & $2^7$ & $2^8$ & $2^{10}$ & $2^{13}$ & $2^{17}$\\
\hline
Dimensions  & $2$ & $\leq4 $& $\leq8$ & $\leq16$ & $\leq32$ & $\leq64$ & $\leq256$ & $\leq2048$ & $\leq32768$\\
\hline
\end{tabular}
\caption{Compression factors for different common matrix dimensions
  modulo~3, with $53$ bits of mantissa and $Q$ a power of
  $2$.}\label{tab:comprange}
\end{table}

On Table~\ref{tab:comprange}, we show the compression factors
modulo~3, with~$Q$ a power of~2 to speed up conversions. For a
dimension $n\le 256$ the compression is at a factor of five and the
time to perform a matrix multiplication is slightly more than a millisecond.
Then from dimensions from 257 to 2048 one has a factor of 4 and the
times are roughly 16 times the time of the four times smaller
matrix. The next stage, from 2048 to~32768 is the one that shows on
Figure~\ref{fig:algo1}.

Figure~\ref{fig:compC} shows the dramatic impact of the compression
dropping from $4$ to $3$ between $n=2048$ and $n=2049$.
It would be interesting to compare the multiplication of
$3$-compressed matrices of size $2049$ with a decomposition of the same
matrix into matrices of sizes $1024$ and $1025$, thus enabling
$4$-compression also for matrices larger than $2048$,
but with more modular reductions.

\subsection{Right or Left Compressed Matrix Multiplication}\label{sec:mixed} 

Another way of performing compressed matrix multiplication is to
multiply an uncompressed $m \times k$ matrix  to the right by a row-compressed
$k \times {n}/({d+1})$ matrix.
We illustrate the idea on~$2\times2$ matrices:
\[
\begin{bmatrix}
  a & b \\
  c & d\\
\end{bmatrix}
\times
\begin{bmatrix}
  e+Qf\\
  g+Qh\\
\end{bmatrix}
=
\begin{bmatrix}
(ae+bg)+Q(af+bh) \\
(ce+dg)+Q(cf+dh) \\
\end{bmatrix}
\]
\begin{figure}[htb]
\centerline{\includegraphics[width=\textwidth*4/11]{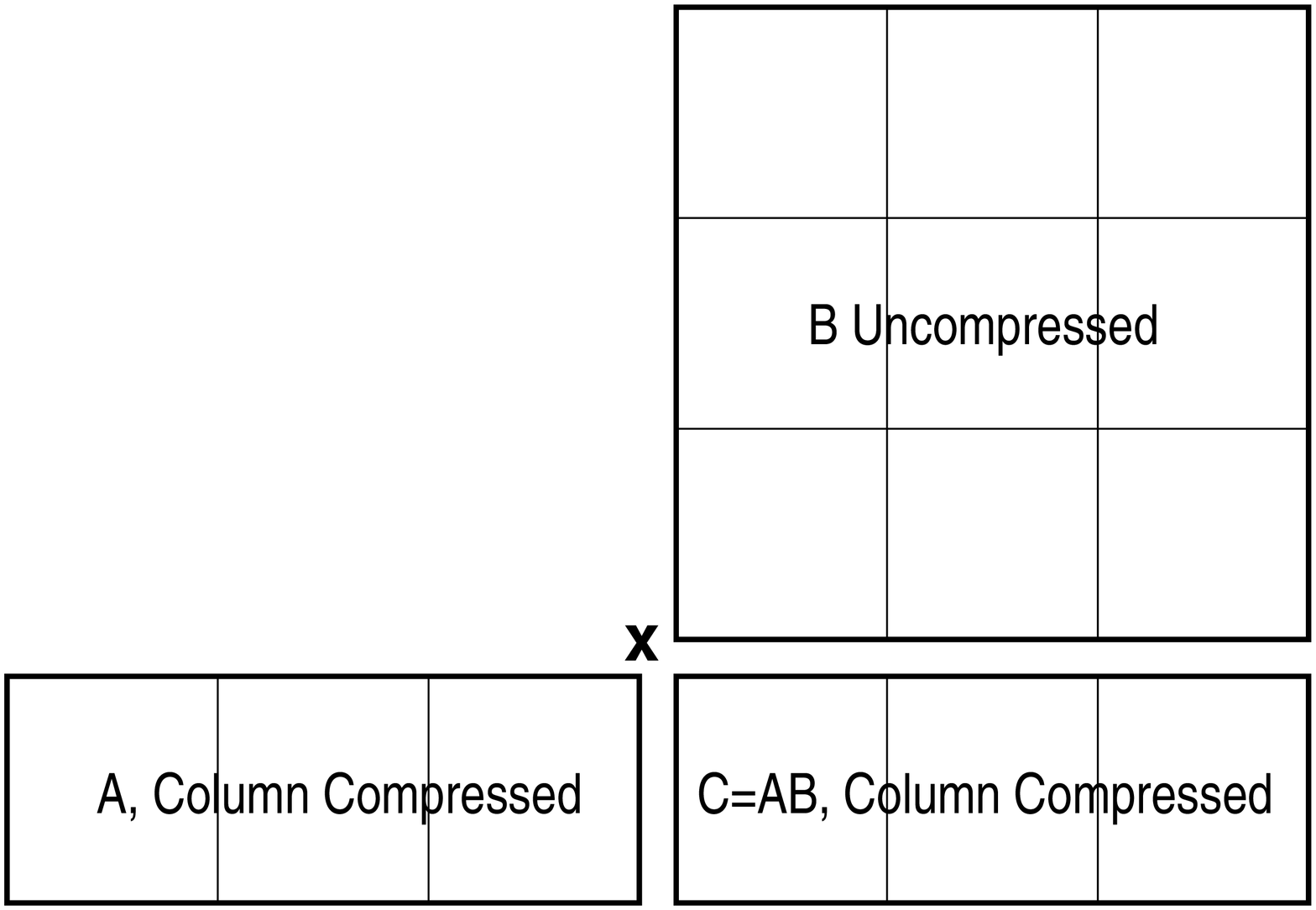}\hfill
\includegraphics[width=\textwidth*4/15]{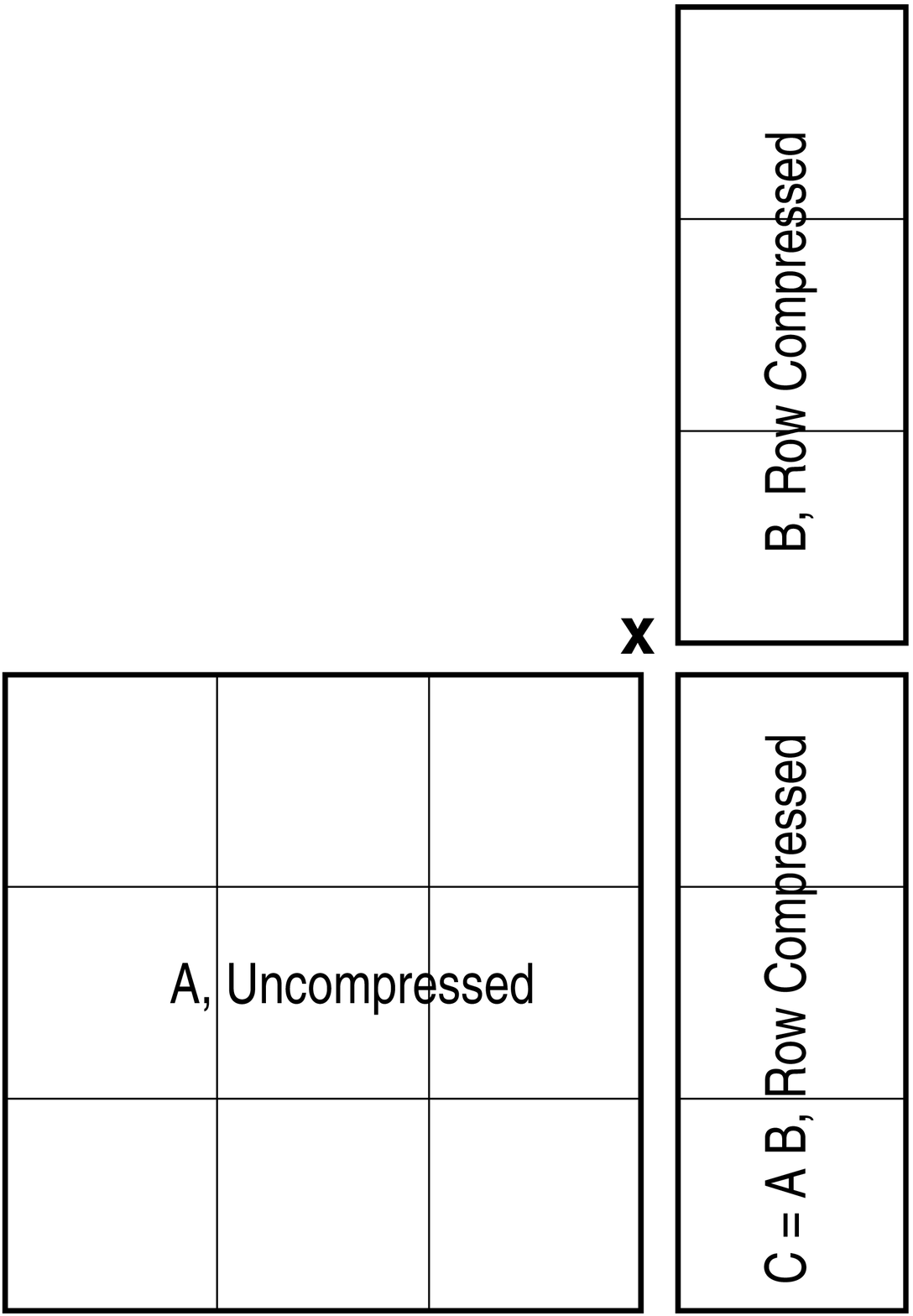}\hfill
\includegraphics[width=\textwidth*4/15]{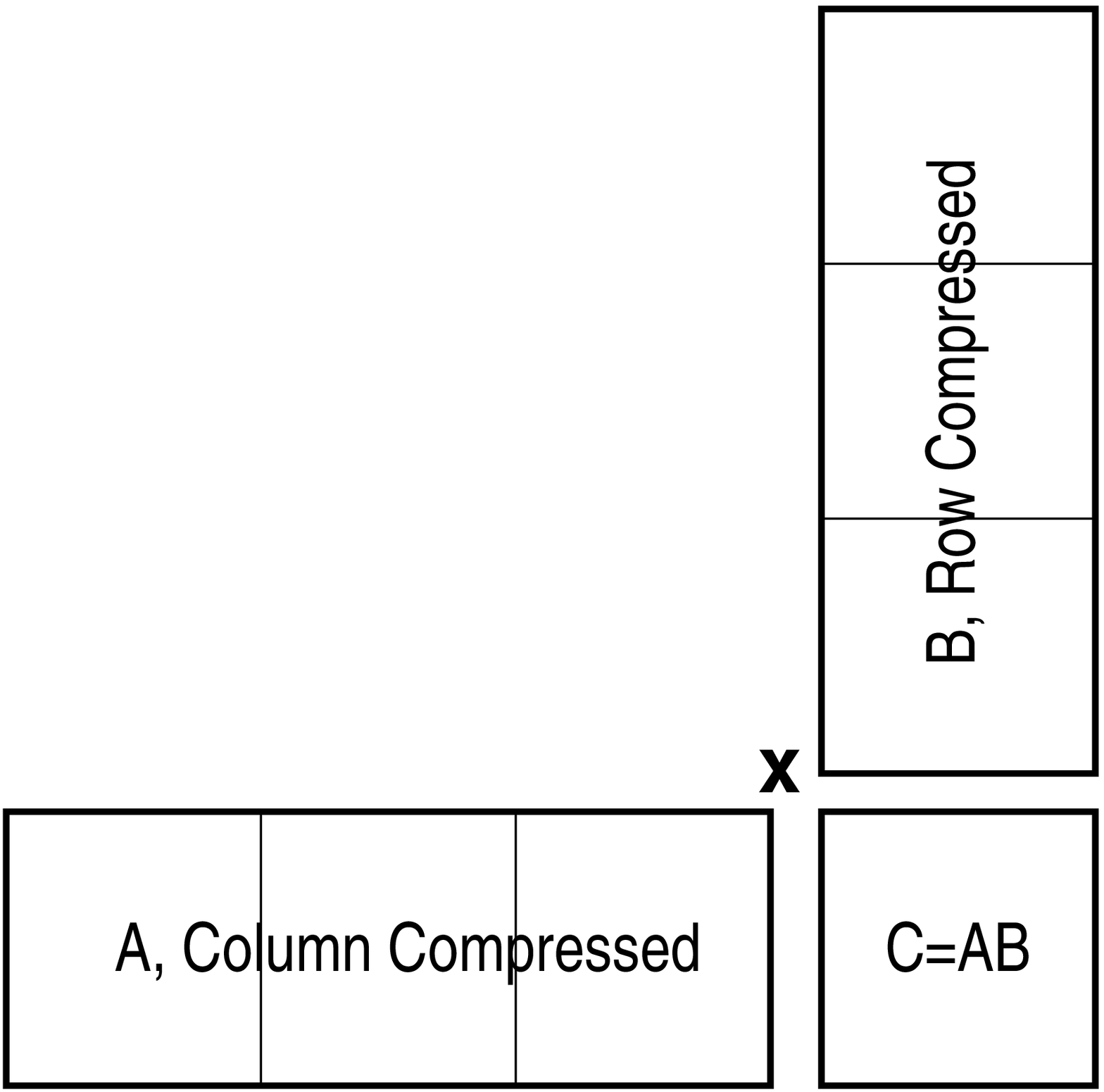}}
\caption{Left, Right and Full Compressions}\label{fig:leftrightfull}
\end{figure}
The general case is depicted on Fig.~\ref{fig:leftrightfull}, center. This is called Right Compressed Matrix Multiplication. Left Compressed Matrix Multiplication is obtained by transposition.
%
Here also $Q$ and $d$ must satisfy Eqs.~(\ref{eq:lower}) and
(\ref{eq:upper}).

The major difference with the Compressed Matrix Multiplication lies in the reductions.
Indeed, now one needs to
reduce simultaneously the $d+1$ coefficients of the polynomial in $Q$
in order to get the results. This simultaneous reduction can be made
by the REDQ algorithm. 

When working over compressed matrices~$CA$ and~$CB$, a first step is to uncompress~$CA$, which has to be taken into account when comparing methods.
Thus the whole right compressed matrix multiplication  is the following
algorithm 
\begin{equation}\label{eq:LeftComp} A=\operatorname{Uncompress}(CA); CC=A\times CB;
  \operatorname{REDQ}(CC)\end{equation}
\begin{figure}[htb]
\centerline{\includegraphics[width=\textwidth]{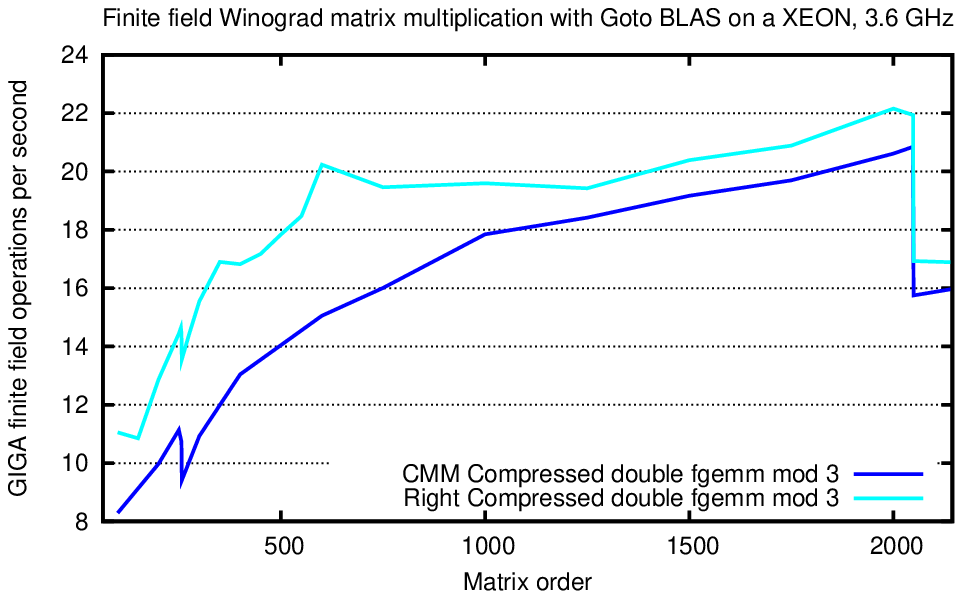}}
\centerline{\includegraphics[width=\textwidth]{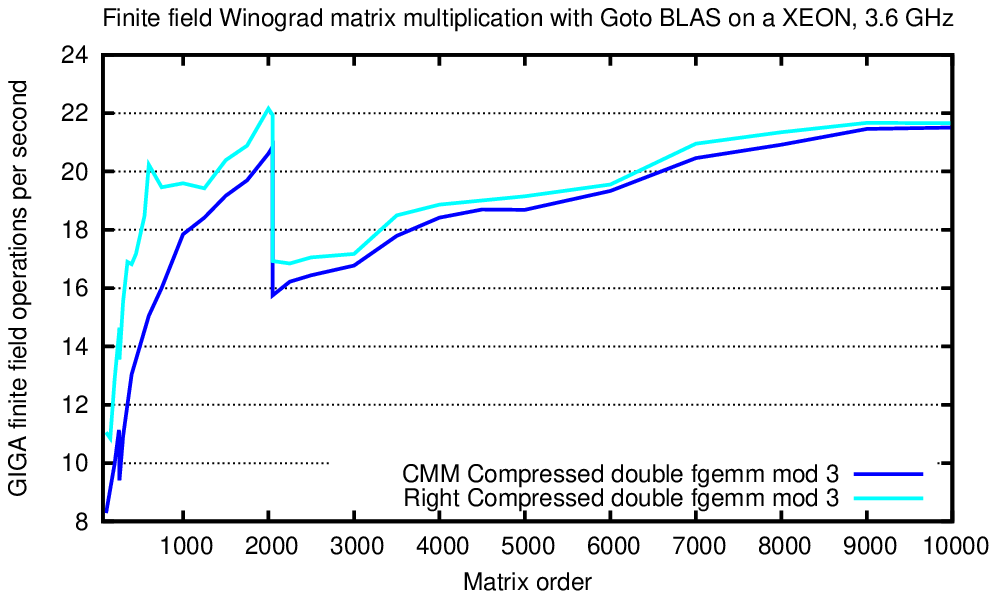}}
\caption{Right Compression and CMM.}\label{fig:right}
\end{figure}
We see on Figure~\ref{fig:right} that the used of REDQ instead of the
middle product algorithm has a high benefit. Indeed for small
matrices, the conversion can represent 30\% of the time and any
improvement there has a high impact.

\subsection{Full Compression} 
It is also possible to compress simultaneously both dimensions of
the matrix product (see Fig.~\ref{fig:leftrightfull}, right).
This is achieved by using 
polynomial multiplication
with two variables $Q$ and $\Theta$.
Again, we start by an example in dimension~2:
\[
\begin{bmatrix}
  a+Qc & b+Qd \\
\end{bmatrix}
\times
\begin{bmatrix}
  e+\Theta f\\
  g+\Theta h\\
\end{bmatrix}
=
\begin{bmatrix}
(ae+bg)+Q(ce+dg)+\Theta(af+bh)+Q\Theta(cf+dh) \\
\end{bmatrix}.
\]
More generally,
let $d_q$ be the degree in $Q$  and
$d_\theta$ be the degree in $\Theta$.
Then, the dot product is:
\begin{align*}
a\cdot b &=[\sum_{i=0}^{d_q} a_{i0}Q^i, \dots, \sum_{i=0}^{d_q} a_{in}Q^i] \times [\sum_{j=0}^{d_\theta} b_{0j}\Theta^j, \dots,
\sum_{j=0}^{d_\theta} b_{nj}\Theta^j],\\
&=\sum_{l=0}^k   (\sum_{i=0}^{d_q} a_{il})(\sum_{j=0}^{d_\theta}
b_{lj})Q^i\Theta^j = \sum_{i=0}^{d_q}\sum_{j=0}^{d_\theta}( \sum_{l=0}^k  a_{il}b_{lj})Q^i\Theta^j.
\end{align*}

In order to guarantee that all the coefficients can be recovered
independently, $Q$ must still satisfy Eq.~(\ref{eq:lower}) but then
$\Theta$ must satisfy an additional constraint:
\begin{equation}\label{eq:qtheta}
Q^{d_q+1} \leq \Theta.
\end{equation}
This imposes restrictions on $d_q$ and $d_\theta$:
\begin{equation}\label{eq:dq}
Q^{(d_q+1)(d_\theta+1)} < 2^\beta.
\end{equation}

\subsection{CMM Comparisons}\label{sec:comparison}

In Table~\ref{tab:gains}, we summarize the differences between the algorithms presented on Figures~\ref{fig:algo1} and~\ref{fig:leftrightfull}. As usual, the exponent~$\omega$ denotes the exponent in the complexity of matrix multiplication. Thus, $\omega=3$ for the classical matrix multiplication, while $\omega<3$ for faster matrix multiplications, as used in \citep[\S
3.2]{jgd:2009:toms}. For products of rectangular matrices, we use the classical technique of first decomposing the matrices into square blocks and then using fast matrix multiplication on those blocks.

\subsubsection{Compression Factor}
The costs in Table~\ref{tab:gains} are expressed in terms of a \emph{compression factor}~$e$, that we define as
\[ e:= \left\lfloor \frac{\beta}{\log_2(Q)} \right\rfloor,\]
where, as above, $\beta$ is the size of the mantissa and~$Q$ is the integer chosen according to Eqs.~\eqref{eq:lower} and~\eqref{eq:upper}, except for Full Compression where the more constrained Eq.~\eqref{eq:dq} is used.

Thus the degree of compression for the first three algorithms is just
$d=e-1$, while it becomes only $d=\sqrt{e}-1$ for the full compression algorithm
(with equal degrees $d_q=d_\theta=d$ for both variables $Q$ and
$\Theta$).

\begin{table}[ht]\center
\begin{tabular}{|c||c|c|c|c|}
\hline
Algorithm               & Operations    & Reductions & Conversions\\
\hline
CMM     & $\GO{ m n
  \left(\frac{k}{e}\right)^{\omega-2}}$ & $m \times n$ $\operatorname{REDC}$ &
$\frac{1}{e} mn$ $\operatorname{INIT}_e$\\
\hline
Right Comp.     & $\GO{  m k \left(\frac{n}{e}\right)^{\omega-2}}$ & $m
\times \frac{n}{e}$ $\operatorname{REDQ}_e$ & $\frac{1}{e}mn$ $\operatorname{EXTRACT}_e$\\
\hline
Left Comp.      & $\GO{  n k \left(\frac{m}{e}\right)^{\omega-2}}$ & $\frac{m}{e}
\times n$ $\operatorname{REDQ}_e$ & $\frac{1}{e}mn$ $\operatorname{EXTRACT}_e$\\
\hline
Full Comp. & $\GO{  k \left(\frac{mn}{e}\right)^{\frac{\omega-1}{2}} }$
& $\frac{m}{\sqrt{e}}
\times \frac{n}{\sqrt{e}}$ $\operatorname{REDQ}_e$ & $\frac{1}{e}mn$ $\operatorname{INIT}_e$ \\
\hline
\end{tabular}
\caption{Number of arithmetic operations for the different
  algorithms}\label{tab:gains}
\end{table}

\subsubsection{Analysis}
In terms of asymptotic complexity, the cost in number of arithmetic operations is dominated by that of the product (column Operations in the table), while reductions and conversions are linear in the dimensions.
This is well reflected in practice.
For example, with algorithm Right Compression on matrices of sizes
$10,000\times 10,000$ it took $90.73$ seconds to perform the matrix
multiplication modulo $3$ and $1.63$ seconds to convert the resulting
matrix. This is less than $2$\%.
For $250\times 250$ matrices it takes less than $0.00216$ seconds to
perform the multiplication and roughly $0.0016$ seconds for the
conversions. There, the conversions account for $43\%$ of the time and
it therefore of extremely high importance to optimize the conversions.

In the case of rectangular matrices, the second column of Table~\ref{tab:gains} shows that one should choose the algorithm depending on the largest dimension: CMM if the common dimension~$k$ is the largest, Right Compression if~$n$ if the largest and Left Compression if~$m$ dominates. The gain in terms of arithmetic operations is
$e^{\omega-2}$ for the first three variants and
$e^{\frac{\omega-1}{2}}$ for full compression. This is not only of
theoretical interest but also of practical value, since the compressed matrices are then less rectangular. This enables more locality for the matrix computations and usually results in better performance.
Thus, even if~$\omega=3$, i.e., classical multiplication is used, these considerations point to a source of speed improvement.

The full compression algorithm seems to be the best candidate for
locality and use of fast matrix multiplication; however the
compression factor is an integer, depending on the flooring of either
$\frac{\beta}{\log_2(Q)}$ or $\sqrt{\frac{\beta}{\log_2(Q)}}$. Thus
there are matrix dimensions for which the compression factor of
e.g., the right compression will be larger than the square of the
compression factor of the full compression. There the right
compression will have some advantage over the full compression.

If the matrices are square ($m=n=k)$ or if $\omega=3$, the products all become the same, with similar constants implied in the~$O()$, so that apart from locality considerations, the difference between them lies in the time spent in reductions and conversions. Since the $\operatorname{REDQ}_e$ reduction is faster than $e$ classical
reductions~\citep{jgd:2008:issac}, and since $\operatorname{INIT}_e$ and
$\operatorname{EXTRACT}_e$ are roughly the same operations, the best algorithm would
then be one of the Left, Right or Full compression.
Further work would include implementing the Full
compression and comparing the actual timings of conversion overhead
with that of the Right algorithm and that of CMM.



\section{Conclusion}
We have proposed a new algorithm for simultaneous reduction of several
residues stored in a single machine word.
For this algorithm we also give a time-memory trade-off implementation
enabling very fast running time if enough memory is available.

We have shown very effective applications of this trick for packing
residues in large applications. This proves efficient for
modular polynomial multiplication, extension fields conversion to
floating point and linear algebra routines 
over small prime fields.

Further work is needed to compare of running times between
different choices for $q$. Indeed our experiments were made with $q$ a
power of two and large table look-up. With $q$ a multiple of $p$ the
table look-up is not needed but divisions by $q^i$ will be more
expensive. A possibility would be taking $q$ in the form $q=p2^t$, 
then only divisions  by $p$ or $p^i$ would be made.

It would also be interesting to see in practice how this trick extends 
to larger precision implementations: on the one hand the basic
arithmetic slows down, but on the other hand the trick enables a more
compact packing of elements (e.g., if an odd number of field elements can
be stored inside two machine words, etc.).



\nocite{Harvey:2007:kronecker,Boldo:2008:delayed}
\bibliographystyle{elsart-harv}
\bibliography{cmm}

\end{document}